\definecolor{darkred}{rgb}{0.7,0.1,0.1}
\definecolor{darkblue}{rgb}{0.1,0.1,0.4}
\definecolor{darkgrey}{rgb}{0.5,0.5,0.5}
\newtheorem{theorem}{Theorem}[section]
\newtheorem{definition}[theorem]{Definition}
\newtheorem{lemma}[theorem]{Lemma}
\newtheorem{proposition}[theorem]{Proposition}
\newtheorem{remark}[theorem]{Remark}
\numberwithin{equation}{section}
\def\B{\mathsf B}
\def\RE{\mathbb R}
\def\CO{{\mathbb C}}
\def\ee{\eta}
\def\es{\tau}
\newcommand{\hyp}[1]{$C^{2}$-hypersurface as in Definition~\ref{definition_hypersurface}}
\DeclareMathOperator\ran{ran}
\newcommand{\dom}{\mathrm{dom}\,}
\begin{document}
\title[]{Limiting absorption principle and scattering matrix for Dirac operators with $\delta$-shell interactions}
\author[]{}

\author[J. Behrndt]{Jussi Behrndt}
\address{Institut f\"{u}r Angewandte Mathematik\\
Technische Universit\"{a}t Graz\\
 Steyrergasse 30, 8010 Graz, Austria\\
E-mail: {\tt behrndt@tugraz.at}}

\author[M. Holzmann]{Markus Holzmann}
\address{Institut f\"{u}r Angewandte Mathematik\\
Technische Universit\"{a}t Graz\\
 Steyrergasse 30, 8010 Graz, Austria\\
E-mail: {\tt holzmann@math.tugraz.at}}

\author[A.~Mantile]{Andrea Mantile}
\address{
Laboratoire de Math\'ematiques \\
Universit\'e de Reims \\
FR3399 CNRS, Moulin de la Housse, BP 1039, 51687 Reims, France \\
E-mail: {\tt andrea.mantile@univ-reims.fr}
}

\author[A.~Posilicano]{Andrea Posilicano}
\address{
DiSAT, Sezione di Matematica \\
Universit\`a dell' Insubria \\
via Valleggio 11, I-22100 Como, Italy \\
E-mail: {\tt andrea.posilicano@uninsubria.it}
}

\begin{abstract}
  We provide a limiting absorption principle for self-adjoint realizations of
Dirac operators with electrostatic and Lorentz scalar $\delta$-shell
interactions supported on regular compact surfaces. Then we show 
completeness of the wave operators and give a representation formula for the
scattering matrix.
\end{abstract}

\keywords{Dirac operator, $\delta$-shell potentials, limiting absorption principle, scattering matrix}

\subjclass[2010]{Primary 81U20; Secondary 35Q40} 
\maketitle

\section{Introduction}

The Dirac operator is one of the main mathematical objects in relativistic quantum mechanics. 
Knowledge of its spectral properties leads to the understanding of the behavior of spin-$\frac{1}{2}$ particles like electrons 
in the corresponding physical system. Moreover, the Dirac operator and its spectral properties play  
an important role in the analysis of graphene type materials. 

Since the spectral analysis of Dirac operators with strongly localized potentials is a challenging problem, such potentials are often 
replaced in mathematical physics by singular $\delta$-type potentials. This idea was successfully applied in nonrelativistic quantum mechanics, 
see, e.g., \cite{AGHH, BEHL17-2, BLL13, BEKS94, E08, EH15, MaPo} and the references therein, and in the recent years also in the relativistic setting. 
In this paper we study singular perturbations of the free Dirac operator $A_0$ acting in 
$L^{2}(\mathbb{R}^{3};{\mathbb{C}}^{4})\cong L^{2}(\mathbb{R}^{3})^{4}$, which are formally given by
\begin{equation*}%
\begin{array}
[c]{ccc}%
A_{\eta,\tau}=A_{0}+(\eta I_{4}+\tau\beta)\delta_{\Gamma}\,, &  & \beta:=%
\begin{pmatrix}
I_{2} & 0\\
0 & -I_{2}%
\end{pmatrix}
\,;
\end{array}
\end{equation*}
see Section~\ref{section_free_op} and Section~\ref{section_delta_op} below for the precise definition and the main properties of the appearing objects.
Here $I_{n}$ denotes the identity in $\mathbb{C}^{n,n}$, and  $\delta
_{\Gamma}$ is the tempered distribution supported on the closed bounded
$C^2$-surface $\Gamma$ and acting on a test function $\varphi$ as $\delta_{\Gamma
}(\varphi):=\int_{\Gamma}\varphi(x)\,d\sigma(x)$. The two $\delta
$-perturbation terms with strengths $\eta,\tau\in\mathbb{R}$ 
define the electrostatic shell interaction $\eta I_{4}\delta_{\Gamma}$ and the Lorentz scalar shell interaction
$\tau\beta\delta_{\Gamma}$, respectively.

Singular
perturbations of the Dirac operator have been introduced first in \cite{GeSe}, where the one dimensional Dirac operator with
point interactions is considered, see also \cite{AGHH, BMP17, CMP13, PR14, S89} for more results on Dirac operators with point interactions in $\mathbb{R}$. Shell
interactions supported on a sphere in $\mathbb{R}^3$ were then introduced in \cite{DiExSe} by
using the one-dimensional results and a decomposition to spherical harmonics.
This problem has been recently reconsidered in \cite{ArMaVe 1,AMV15,ArMaVe 3},
where in the case of a $C^2$-surface the self-adjointness and
several properties of Dirac operators with electrostatic $\delta$-perturbations are
derived. An alternative construction of Dirac operators with electrostatic and
Lorentz scalar $\delta $-shell interactions was proposed in
\cite{BEHL17} and further developed in \cite{BEHL19,BeHo}. This approach is based on the method of
quasi boundary triples, originally introduced in \cite{BeLa} for the study of
elliptic partial differential operators. Quasi boundary triples allow to
define distributional perturbations supported on subsets of zero measure, or
more general singular perturbations, as extensions of a symmetric restriction
of an unperturbed operator. This approach easily adapts to the case of Dirac
operators since, in contrast to form methods, no semi-boundedness is required; alternatively one could use the method of self-adjoint extensions of restrictions developed in \cite{P01, P08}.
Next, the fundamental spectral properties of $A_{\eta, \tau}$ under various assumptions on the parameters $\eta$ and $\tau$ were studied in 
\cite{BEHL19, HOP18, MOP18, OP}, see also \cite{BHOP19, PV} for results in the two-dimensional case, and the usage as a model for Dirac operators with strongly localized potentials is justified in some situations in
\cite{MP18} by an approximation result. 
It is also worth mentioning that, modelling $\delta $-shell interactions for
the Dirac operator, a relevant role is played by the parameter $\eta^{2}
-\tau^{2}$; depending on the critical condition $\eta^{2}-\tau^{2}=4$ (so
fixed by our choice of physical units), unexpected spectral effects arise.
While the works mentioned before consider the non-critical case $\eta^{2}
-\tau^{2}\neq4$, the critical regime has been recently investigated in
\cite{BeHo, OV17} and also in \cite{BHOP19}. 

While, as  mentioned above, the spectral properties of $A_{\eta, \tau}$ were investigated, there are hardly no results on scattering theory.
Only  the existence and completeness of the wave operators was shown in the case of electrostatic $\delta $-shell
interactions ($\tau=0$) in \cite{BEHL17} under $C^\infty$-smoothness assumptions on the surface
$\Gamma$; this result was extended in \cite[Proposition~4.7]{BEHL19} for combinations of electrostatic and scalar potentials. 
For this reason, we are concerned in this work with the direct scattering problem for the
couple $(A_{\eta,\tau},A_{0})$. As in most of the above mentioned papers, we consider the three dimensional case; 
nevertheless, using the results from the recent paper \cite{BHOP19} we expect that our approach should also work in space dimension two. In the present paper, we prove  completeness for the scattering
couple $(A_{\eta,\tau},A_{0})$ and provide a representation formula for the
corresponding scattering matrix. More precisely, it will be shown that the
wave operators
\[
W_{\pm}(A_{\eta,\tau},A_{0}):=\text{s-}\lim_{t\rightarrow\pm\infty
}e^{itA_{\eta,\tau}}e^{-itA_{0}}
\]
exist in $L^{2}(\mathbb{R}^{3})^{4}$ and that their ranges 
coincide with the
absolutely continuous subspace of the perturbed operator $A_{\eta,\tau}$. 
Our method to prove
completeness of the wave operators (borrowed from \cite{MaPo}, see Theorem 2.8
there) requires estimates which follow from the limiting absorption principle. Thus our first
goal (and our first main result) in the present paper is to provide a limiting absorption principle for
$A_{\eta,\tau}$ in Theorem~\ref{theorem_LAP}. Due to the lack of semiboundedness this property does not follow directly from the general results in \cite{MaPo}. In this paper we prove the limiting absorption principle by exploiting, besides the limiting absorption principle for
$A_{0}$ and Kre\u{\i}n's resolvent formula
\[
(A_{\eta,\tau}-z)^{-1}=(A_{0}-z)^{-1}-G_{z}\Lambda_{z}^{\eta,\tau}G_{\bar{z}}^{\ast}
\]
as in \cite{MaPo}, some specific properties of the family of operators
$\Lambda_{z}^{\eta,\tau}$ provided in \cite{BeHo}. The limit resolvent at $\lambda \in \mathbb{R}$ then turns
out to have the same structure
\[
R_{\lambda}^{\eta,\tau,\pm}=\lim_{\varepsilon\searrow0}(A_{\eta,\tau
}-(\lambda\pm i\epsilon))^{-1}=R_{\lambda}^{0,\pm}-G_{z}^{\pm}\Lambda
_{\lambda}^{\eta,\tau,\pm}G_{\lambda}^{\mp\ast}.
\]
Once existence and completeness for the wave operators is achieved, we can
define the scattering operator $S_{\eta,\tau}:=W_{+}(A_{\eta,\tau}
,A_{0})^{\ast}W_{-}(A_{\eta,\tau},A_{0})$ and (the physically relevant)
scattering matrix $S_{\eta,\tau}(\lambda)$ via
$$S_{\eta,\tau}(\lambda)(F_{0} f)(\lambda)=(F_{0}S_{\eta,\tau} f)(\lambda),$$ 
where $F_{0}$ is the unitary map which diagonalizes the free Dirac
operator $A_{0}$. In our second main result Theorem~\ref{theorem_scattering_matrix} we provide a
representation formula for $S_{\eta,\tau}(\lambda)$ in terms of the limit
operators $\Lambda_{\lambda}^{\eta,\tau,+}$ appearing in the resolvent formula
above. In order to get such a representation, we follow the same scheme as in
\cite[Section 4]{MaPo}: Birman-Yafaev stationary scattering theory for the
resolvent couple $(-R_{\mu}^{\eta,\tau},-R_{\mu}^{0})$ and Kato-Birman
invariance principle. We also refer the reader to \cite{AP86,BMN08,BeMaNa,BH19-2} for a closely related approach to scattering theory in the context of extension methods and 
Kre\u{\i}n's resolvent formula. Moreover, for a comprehensive list of references on the limiting absorption principle for Dirac operators with regular potentials we refer to \cite{Fritz}.

The paper is organized as follows: In Section~\ref{section_preliminaries} we recall the definition of weighted Sobolev spaces,
the limiting absorption principle for the free Dirac operator, and we study some families of operators which are related to the resolvent of 
the free Dirac operator. 
Section~\ref{section_delta_op} focuses on the rigorous definition and the spectral properties of $A_{\eta, \tau}$; here the main result is 
the limiting absorption principle for $A_{\eta, \tau}$. Finally, in Section~\ref{section_scattering} we prove completeness for the scattering
couple $(A_{\eta,\tau},A_{0})$ and provide a  formula for the
 scattering matrix.

\subsection*{Notations}

By $\mathbb{C}_\pm$ we denote the upper and lower complex half plane, respectively. Let $X$ and $Y$ be Hilbert spaces. We use for $n \in \mathbb{N}$ the notation $X^n := X \otimes \mathbb{C}^n$; the elements of $X^n$ are vectors with entries in $X$. Next  $\B(X,Y)$ is the set of all bounded and everywhere defined operators from $X$ to $Y$. The anti-dual operator of $A \in \B(X, Y)$ is denoted by $A^*$ and maps from $Y'$ to $X'$. If $A$ is a closed operator, then $\dom A$ and $\ran A$ denote the domain of definition and the range of $A$, respectively. If $A$ is self-adjoint, then we denote by $\mathsf{res}(A)$, $\sigma(A)$, $\sigma_\text{p}(A)$, $\sigma_\text{disc}(A)$, $\sigma_\text{ess}(A)$, and $\sigma_\text{ac}(A)$ the resolvent set, the spectrum, the point, the discrete, the essential, and the absolutely continuous spectrum of $A$, respectively. For $z \in \mathsf{res}(A)$ we often write $R_z := (A-z)^{-1}$. Finally, for an open set $\Omega \subset \mathbb{R}^3$ the $L^2$-based Sobolev spaces of order $s \in \mathbb{R}$ are denoted by $H^s(\Omega)$, while the Sobolev space on  a sufficiently regular surface $\Gamma$ are denoted by $H^s(\Gamma)$.

\section{Preliminaries} \label{section_preliminaries}

In this section we collect some preliminary material which is needed to 
formulate and prove the limiting absorption principle for Dirac operators with singular interactions in Section~\ref{section_delta_op}. 
We recall the definitions of weighted Sobolev spaces, the free Dirac operator $A_0$, and provide a 
limiting absorption principle for its resolvent. We also discuss some 
auxiliary operators associated to the resolvent of $A_0$ which are crucial to study the Dirac operator $A_{\eta, \tau}$ with a $\delta$-potential.

\subsection{Weighted Sobolev spaces} \label{section_weighted_Sobolev_spaces}

In the formulation of the limiting absorption principle 
weighted $L^2$-spaces $L^2_w(\mathbb{R}^3)$ and weighted Sobolev spaces $H^s_w(\mathbb{R}^3)$ play an important role. The definition of these spaces below follows for 
indices $s \in \mathbb{N}_0$ the classical one in \cite{Ag} and is extended to general $s \in \mathbb{R}$ via interpolation; cf. \cite[page~245]{ST} and also \cite[Appendix~B]{McLe}.

\begin{definition}
\label{Notation 1}
Let $\left\langle x\right\rangle :=(  1+\vert x\vert ^{2})^{1/2}$ and $w \in \mathbb{R}$. Then we define the weighted $L^2$-space by
\begin{equation*} 
L_{w}^{2}(  \mathbb{R}^{3})  :=\bigl\{f\in\mathcal{S}^\prime(\mathbb{R}^{3}) : \langle x\rangle^{w} f\in
L^{2}(\mathbb{R}^{3})\bigr\}  
\end{equation*}
with norm 
\begin{equation*}
  \| f \|_{L^2_w(\mathbb{R}^3)}^2 := \int_{\mathbb{R}^3} ( 1 + |x|^2 )^w |f(x)|^2 \textup{d} x.
\end{equation*}
The weighted Sobolev spaces of order $l \in \mathbb{N}_0$ are defined by 
\begin{equation*}
  H_w^l(\mathbb{R}^3) = \bigl\{ f \in L^2_w(\mathbb{R}^3): D^\alpha f \in L^2_w(\mathbb{R}^3)\, \forall \alpha \in \mathbb{N}_0^3, |\alpha| \leq l \bigr\},
\end{equation*}
where $D^\alpha$ denotes the weak derivative (of order $\alpha\in \mathbb{N}_0^3$), and equipped with the
norms
\begin{equation*}
  \| f \|_{H_w^l(\mathbb{R}^3)}^2 = \sum_{|\alpha| \leq l} \| D^\alpha f \|_{L^2_w(\mathbb{R}^3)}^2.
\end{equation*}
If $t_0 < t_1$ are two natural numbers, $\theta \in (0,1)$, and 
$s=(  1-\theta)  t_{0}+\theta t_{1}$, then we define 
$H^s_w(\mathbb{R}^3)$ (and a Hilbert space norm) via interpolation 
\begin{equation*}
H_{w}^{s}(  \mathbb{R}^{3})  :=\left[  H_{w}^{t_{0}}(\mathbb{R}^{3})  ,H_{w}^{t_{1}}(  \mathbb{R}^{3})  \right]_{\theta},
\end{equation*}
and for $s<0$ we set $H_{w}^{s}(  \mathbb{R}^{3})  := ( H_{-w}^{-s} (\mathbb{R}^3))'$
equipped with the corresponding norm.
\end{definition}


Next, we state several known results on the trace operator which enter in the construction of singular
perturbations of the free Dirac operator. Let $\Omega\subset\mathbb{R}^{3}$ be an open and bounded
$C^2$-domain, i.e. $\Gamma=\partial\Omega$ is a closed bounded
surface of class $C^2$. We denote
\begin{equation*}
\begin{array}
[c]{ccc}
\Omega_{-}=\Omega, &  & \Omega_{+}=\mathbb{R}^{3}\setminus \overline{\Omega}.
\end{array}
\end{equation*} 
The lateral traces on $\Gamma$ are defined on $C^\infty(\overline{\Omega_{\pm}})^4  $ by $\gamma_{0}^{\pm}u_{\pm}:=u_{\pm
}\vert_\Gamma$. These extend to bounded surjective maps $\gamma_{0}^{\pm}\in\mathsf{B}\left(  H^{1/2+s}(
\Omega_{\pm})^4  ,H^{s}(  \Gamma)^4  \right)$, $s\in (0, \frac{3}{2}]$, see, e.g., 
\cite[Theorem~3.37]{McLe}.
The trace on $\Gamma$ is defined as the mean value
\begin{equation*}
\gamma_{0}:=\frac{1}{2} \left(  \gamma_{0}^{+}+\gamma_{0}^{-}\right)  
\end{equation*}
and will be viewed as a bounded operator from either $H^{s+1/2}(\mathbb{R}^{3})^4$ or 
$H^{s+1/2}(  \mathbb{R}^{3}\setminus\Gamma)^4$ to $H^s(\Gamma)^4$ for $s\in (0, \frac32]$; from the context it will be clear on which space $\gamma_0$ is defined. 
Since
$\Gamma$ is a bounded set it is also clear that $\gamma_0$
is bounded as an operator defined on the weighted spaces $H_w^{s+1/2}(  \mathbb{R}^{3})^4$, more precisely, we have 
\begin{equation*}
\gamma_{0}\in\mathsf{B}\bigl(  H_w^{s+1/2}(  \mathbb{R}^{3})^4
,H^{s}(  \Gamma)^4  \bigr)  \,,\ s\in \big(0,\tfrac32\big],\,\,\,w\in\mathbb R, 
\end{equation*}
and for the anti-dual operator it follows
\begin{equation}\label{allesklar}
\gamma_0^*\in\mathsf{B}\bigl(H^{-s}(\Gamma)^4,  H_{-w}^{-s-1/2}(  \mathbb{R}^{3})^4 \bigr)  \,,\ s\in \big(0, \tfrac32\big],\,\,\,w\in\mathbb R;
\end{equation}
here $\gamma_0^*$ is defined by 
$(\gamma_0^*\varphi)(f)=(\varphi,\gamma_0 f)_{H^{-s}(\Gamma)^4\times H^s(\Gamma)^4}$ for $\varphi\in H^{-s}(\Gamma)^4$, $f\in H_w^{s+1/2}(  \mathbb{R}^{3})^4$,
and $(\cdot,\cdot)_{H^{-s}(\Gamma)^4\times H^s(\Gamma)^4}$ denotes the extension of the $L^2$-scalar product to the dual pair $H^{-s}(\Gamma)^4\times H^s(\Gamma)^4$.

\subsection{The limiting absorption principle for the free Dirac operator} \label{section_free_op}

In this section we recall the definition of the free Dirac operator and how the limiting absorption principle for 
its resolvent can be proved. Many of the mapping properties below can be shown in (weighted) Sobolev spaces 
$H^s_w(\mathbb{R}^3)^4$ for any $s \in \mathbb{R}$, but for simplicity we state them just for those $s$ which are needed later in our applications.
Let $\sigma_{j}\in\mathbb{C}^{2,2}$, $j=1,2,3$, denote the Pauli matrices%
\begin{equation*}%
\begin{array}
[c]{ccccc}%
\sigma_{1}=%
\begin{pmatrix}
0 & 1\\
1 & 0
\end{pmatrix}
\,, &  & \sigma_{2}=%
\begin{pmatrix}
0 & -i\\
i & 0
\end{pmatrix}
\,, &  & \sigma_{3}=%
\begin{pmatrix}
1 & 0\\
0 & -1
\end{pmatrix}
\,,
\end{array}
\end{equation*}
and $\alpha_{j}$, $\beta\in\mathbb{C}^{4,4}$, $j=1,2,3$, the Dirac matrices
\begin{equation*}%
\begin{array}
[c]{ccc}%
\alpha_{j}=%
\begin{pmatrix}
0 & \sigma_{j}\\
\sigma_{j} & 0
\end{pmatrix}
\,, &  & \beta=%
\begin{pmatrix}
I_{2} & 0\\
0 & -I_{2}%
\end{pmatrix}
\,,
\end{array}
\end{equation*}
where $I_n$
is the identity in
$\mathbb{C}^{n,n}$. We will often use for $x=(x_1,x_2,x_3) \in \mathbb{R}^3$ the notations $\alpha \cdot x = \alpha_1 x_1 + \alpha_2 x_2 + \alpha_3 x_3$ and $\alpha \cdot \nabla = \alpha_1 \partial_1 + \alpha_2 \partial_2 + \alpha_3 \partial_3$.

In the units $\hbar=c=1$ the Dirac operator $A_{0}$ for a free
relativistic particle of mass $m=1$ is the unbounded self-adjoint operator in
$L^{2}(\mathbb{R}^{3})^4$ defined by
\begin{equation}
A_{0}=-i\sum_{j=1}^{3}\alpha_{j}\partial_{j}+\beta,\quad\dom  
A_{0}  =H^{1}(  \mathbb{R}^{3})^4  . \label{D_op}%
\end{equation}
Its spectrum is
\begin{equation*}
\sigma(A_0)  =\sigma_\textup{ac}(A_0)  =\left(
-\infty,-1\right]  \cup\left[  1,\infty\right)
\end{equation*}
and one has $\sigma_\text{p}(A_0)  =\varnothing$. The operator
$A_{0}$ in \eqref{D_op} can also be viewed as an operator from $H^{1}%
(\mathbb{R}^{3})^4$ to $L^{2}(\mathbb{R}^{3})^4$, where it is also bounded, so
$A_{0}\in\mathsf{B}(  H^{1}(\mathbb{R}^{3})^4,L^{2}(\mathbb{R}^{3})^4)
$. For $z\in\mathsf{res}\left(  A_{0}\right)  $ the operator $A_{0}%
-z\in\mathsf{B}(  H^{1}(\mathbb{R}^{3})^4,L^{2}(\mathbb{R}^{3})^4)  $
is bijective, and by duality one also has that $A_{0}-z\in\mathsf{B}(
L^{2}(\mathbb{R}^{3})^4,H^{-1}(\mathbb{R}^{3})^4)  $ is bijective for
$z\in\mathsf{res}\left(  A_{0}\right)  $. Hence, by interpolation
\begin{equation}
A_{0}-z\in\mathsf{B}\big(  H^{-s+1}(\mathbb{R}^{3})^4,H^{-s}(\mathbb{R}%
^{3})^4\big)  ,\qquad s\in\lbrack0,1],~z\in\mathsf{res}\left(A_{0}\right), \label{azint}%
\end{equation}
is bijective.
Setting $R_{z}^{0}:=\left(  A_{0}-z\right)  ^{-1}$, $z\in\mathsf{res}\left(
A_{0}\right)  $, we obtain the following lemma.

\begin{lemma}
\label{Lemma_Res_map} For $s\in[0,1]$ the map
\begin{equation*}
z\rightarrow R_{z}^{0}\in\mathsf{B}\big(  H^{-s}(  \mathbb{R}^{3})^4  ,H^{-s+1}(  \mathbb{R}^{3})^4  \big) 
\end{equation*}
is holomorphic on $\mathsf{res}(A_0)=\mathbb{C}\setminus 
\big((-\infty,-1]  \cup [  1,\infty) \big)$.
\end{lemma}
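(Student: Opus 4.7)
Since \eqref{azint} already establishes the pointwise boundedness $R_z^0 \in \mathsf{B}\bigl(H^{-s}(\mathbb{R}^3)^4, H^{-s+1}(\mathbb{R}^3)^4\bigr)$ for each $z \in \mathsf{res}(A_0)$, the only task is to upgrade this to holomorphic dependence on $z$. The strategy is to derive, around an arbitrarily fixed $z_0 \in \mathsf{res}(A_0)$, a locally convergent power series expansion of Neumann type, starting from the first resolvent identity
$$R_z^0 - R_{z_0}^0 = (z-z_0)\, R_z^0\, R_{z_0}^0,$$
which holds as an equality of operators on $\mathcal{S}'(\mathbb{R}^3)^4$ and hence also at the level of all Sobolev spaces appearing here.

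Rewriting the identity as $(I - (z-z_0) R_{z_0}^0)\, R_z^0 = R_{z_0}^0$ and formally inverting the left factor via a Neumann series leads to the expansion
$$R_z^0 = \sum_{k=0}^\infty (z - z_0)^k (R_{z_0}^0)^{k+1}.$$
The key ingredient required to make this series converge in $\mathsf{B}\bigl(H^{-s}(\mathbb{R}^3)^4, H^{-s+1}(\mathbb{R}^3)^4\bigr)$ is the fact that $R_{z_0}^0$ also defines a bounded operator on the intermediate space $H^{-s+1}(\mathbb{R}^3)^4$. I would verify this by factoring through the standard Sobolev scale,
$$H^{-s+1}(\mathbb{R}^3)^4 \hookrightarrow L^2(\mathbb{R}^3)^4 \xrightarrow{R_{z_0}^0} H^1(\mathbb{R}^3)^4 \hookrightarrow H^{-s+1}(\mathbb{R}^3)^4,$$
where the two continuous embeddings hold precisely because $-s+1 \in [0,1]$. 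Each operator $(R_{z_0}^0)^{k+1}$ then decomposes as one application of $R_{z_0}^0 : H^{-s}(\mathbb{R}^3)^4 \to H^{-s+1}(\mathbb{R}^3)^4$ (from \eqref{azint}) followed by $k$ applications of $R_{z_0}^0 : H^{-s+1}(\mathbb{R}^3)^4 \to H^{-s+1}(\mathbb{R}^3)^4$, yielding the geometric estimate that makes the series converge in $\mathsf{B}\bigl(H^{-s}(\mathbb{R}^3)^4, H^{-s+1}(\mathbb{R}^3)^4\bigr)$ on the disc $|z - z_0| < \|R_{z_0}^0\|_{\mathsf{B}(H^{-s+1}(\mathbb{R}^3)^4)}^{-1}$.

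Since $z_0 \in \mathsf{res}(A_0)$ is arbitrary, this produces a holomorphic representation of $z \mapsto R_z^0$ on a neighbourhood of every point of $\mathsf{res}(A_0)$, and the asserted global holomorphy follows. The only genuine point to watch is the intermediate-space boundedness of $R_{z_0}^0$ on $H^{-s+1}(\mathbb{R}^3)^4$, which is exactly where the hypothesis $s \in [0,1]$ is used through the two Sobolev embeddings; beyond this the argument is purely mechanical.
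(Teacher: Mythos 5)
Your proof is correct and follows essentially the same route as the paper: a local argument around an arbitrary $z_0\in\mathsf{res}(A_0)$ based on the first resolvent identity, with the paper inverting $I-(z-z_0)R_{z_0}^0$ abstractly to get uniform boundedness, continuity and then holomorphy, while you expand the same inverse as an explicit Neumann power series. Your explicit verification that $R_{z_0}^0\in\mathsf{B}\bigl(H^{-s+1}(\mathbb{R}^3)^4\bigr)$ via the factorization $H^{-s+1}\hookrightarrow L^2\to H^1\hookrightarrow H^{-s+1}$ is exactly the point the paper's argument also relies on (implicitly), so nothing is missing.
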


\begin{proof}
Fix $s\in[0,1]$ and $z_0\in \mathsf{res}(A_0)$, and let $z\in B_{\delta}(z_{0})$ with $\delta>0$ sufficiently small. From the identity 
\begin{equation*}
R_{z}^{0}=\left(  1 - R_{z_{0}}^{0}\left(  z-z_{0}\right)  \right)
^{-1}R_{z_{0}}^{0}
\end{equation*}
it follows that the family $\{  R_{z}^{0} : z\in B_{\delta}(z_{0})\}$ is uniformly bounded with respect to the norm 
in $\mathsf{B}(H^{-s}(  \mathbb{R}^{3})^4  ,H^{-s+1}(  \mathbb{R}^{3})^4)$. Now the resolvent identity
\begin{equation}
\label{resid}R_{z}^{0}-R_{z_{0}}^{0}=(z-z_{0})R_{z}^{0} R_{z_{0}}^{0}
\end{equation}
implies first that the map $z\rightarrow R_{z}^{0}$ is continuous in $z_{0}\in\mathsf{res}(A_0)$ with values in 
$\mathsf{B}(H^{-s}(\mathbb{R}^{3})^4 ,H^{-s+1}(\mathbb{R}^{3})^4)$. In a second step \eqref{resid} implies
that $z\rightarrow R_{z}^{0}$ is holomorphic with values in $\mathsf{B}(H^{-s}(\mathbb{R}^{3})^4 ,H^{-s+1}(\mathbb{R}^{3})^4)$.
\end{proof}

It is not difficult to check that the Dirac operator in \eqref{D_op} is
bounded as an operator from $H_{w}^{l+1}(\mathbb{R}^{3})^4$ to $H_{w}^{l}(\mathbb{R}^{3})^4$ for $l\in \mathbb{N}_0$
and $w\in\mathbb{R}$, in particular, 
\begin{equation*}
A_{0}\in\mathsf{B}\big(  H_w^{l+1}(\mathbb{R}^{3})^4,H_{w}^{l}(\mathbb{R}^{3})^4\big),\qquad w\in\mathbb{R},\,\, l=0,1.
\end{equation*}
By duality, one has $A_{0}\in\mathsf{B}(  H_{-w}^{-l}(\mathbb{R}^{3})^4,H_{-w}^{-l-1}(\mathbb{R}^{3})^4)$ for $l=0,1$
and $w\in\mathbb{R}$, and hence $A_{0}\in\mathsf{B}(  H_{w}^{-l}(\mathbb{R}^{3})^4,H_{w}^{-l-1}(\mathbb{R}^{3})^4)$ for $l=0,1$
and $w\in\mathbb{R}$. Interpolation yields
\begin{equation}
A_{0}\in\mathsf{B}\big(  H_{w}^{-s+1}(\mathbb{R}^{3})^4,H_{w}^{-s}(\mathbb{R}^{3})^4\big)  ,\qquad w\in\mathbb{R},\,\,s\in[-1,1],
\label{aow}%
\end{equation}
in analogy with \eqref{azint}. This property extends to all $s\in\mathbb R$, but only $s\in[-1,1]$ is needed here.

%

Next we provide some properties of the resolvent of $A_{0}$
and its limit behaviour when $z$ tends from $\mathbb{C}_\pm$ to the continuous spectrum. In particular, it
turns out that the resolvent $z\mapsto R_{z}^{0}$ extends continuously to
$\lambda\pm i0$, $\lambda\in (  -\infty,-1)\cup(1,\infty)$, in the weaker topology of 
$\mathsf{B}(  H_{w}^{-s}(  \mathbb{R}^{3})^4  ,H_{-w}^{-s+1}(  \mathbb{R}^{3})^4 )  $ for $w>1/2$.

\begin{proposition}
\label{Proposition_Res_D} The resolvent $R_{z}^{0}$ of the free Dirac operator
$A_{0}$ in \eqref{D_op} has the following properties.

\begin{itemize}
\item[$(i)$] For $w\in\mathbb{R}$ and $s\in [0,1]$
we have $R_{z}^{0}\in\mathsf{B}(H_{w}^{-s}(\mathbb{R}^{3})^4,H_{w}^{-s+1}(\mathbb{R}^{3})^4)$, $z\in\mathsf{res}(A_0)$.

\item[$(ii)$] For $w>1/2$ and $s\in [0,1]$ the limits
\begin{equation*}
R_{\lambda}^{0,\pm}:=\lim_{\varepsilon\searrow0}R_{\lambda\pm i\varepsilon}^{0}\,,\quad\lambda\in\left(  -\infty,-1\right) \cup \left(  1,\infty\right)  ,
\end{equation*}
exist in $\mathsf{B}(H_{w}^{-s}(\mathbb{R}^{3})^4, H_{-w}^{-s+1}(\mathbb{R}^{3})^4 )$ and the maps
\begin{equation} \label{R_pm_continuous}
  z \mapsto R_z^{0, \pm} := \begin{cases} R_z^0, & z \in \mathbb{C} \setminus ( (-\infty, -1] \cup [1, \infty) ), \\ R_\lambda^{0, \pm}, & z=\lambda \in (-\infty, -1) \cup (1, \infty), \end{cases}
\end{equation}
are continuous from $\overline{\mathbb{C}_\pm} \setminus \{ -1, 1 \}$ to $\mathsf{B}(  H_{w}^{-s}( \mathbb{R}^{3})^4, H_{-w}^{-s+1}(\mathbb{R}^{3})^4)$.
Moreover, each limit
$R_{\lambda}^{0,\pm}$ defines a right inverse of $\left(  A_{0}-\lambda \right)  $, i.e.%
\begin{equation*}
\left(  A_{0}-\lambda\right)  R_{\lambda}^{0,\pm}= I_4.
\end{equation*}

\end{itemize}
\end{proposition}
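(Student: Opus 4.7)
The plan is to reduce the statement to the corresponding well-known facts for the scalar Helmholtz operator by exploiting the anti-commutation relations of the Dirac matrices. Since $\{\alpha_j,\alpha_k\}=2\delta_{jk}I_4$ and $\{\alpha_j,\beta\}=0$, one has $A_0^2=(-\Delta+1)I_4$ on $H^2(\mathbb{R}^3)^4$, and I would work throughout with the factorization
\begin{equation*}
R_z^0=(A_0+z)(-\Delta+1-z^2)^{-1}I_4,\qquad z\in\mathsf{res}(A_0).
\end{equation*}
The map $z\mapsto\zeta(z):=z^2-1$ sends $\mathsf{res}(A_0)=\mathbb{C}\setminus((-\infty,-1]\cup[1,\infty))$ into $\mathbb{C}\setminus[0,\infty)$, with one-sided limits at $\lambda\in(-\infty,-1)\cup(1,\infty)$ going to one-sided limits at $\zeta(\lambda)=\lambda^2-1>0$, so the limiting absorption problem for $A_0$ reduces to the standard one for $-\Delta$.

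For part (i), I would note that the scalar Green's function $(4\pi|x-y|)^{-1}e^{-\sqrt{-\zeta(z)}|x-y|}$ of $-\Delta-\zeta(z)$ decays exponentially for any $z\in\mathsf{res}(A_0)$. A standard Fourier multiplier or commutator argument, combined with interpolation in $s$, will then yield
\begin{equation*}
(-\Delta-\zeta(z))^{-1}\in\mathsf{B}\bigl(H_w^{-s}(\mathbb{R}^3)^4,H_w^{-s+2}(\mathbb{R}^3)^4\bigr)
\end{equation*}
for every $w\in\mathbb{R}$ and $s\in[0,1]$. Composing with $A_0+z\in\mathsf{B}(H_w^{-s+2}(\mathbb{R}^3)^4,H_w^{-s+1}(\mathbb{R}^3)^4)$ from \eqref{aow} would give (i).

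For part (ii), I would invoke the classical Agmon--Kato limiting absorption principle for $-\Delta$ on $\mathbb{R}^3$: for any $w>1/2$ and $\mu>0$, the boundary values $(-\Delta-\mu\mp i0)^{-1}=\lim_{\varepsilon\searrow 0}(-\Delta-\mu\mp i\varepsilon)^{-1}$ exist in $\mathsf{B}(L_w^2(\mathbb{R}^3)^4,L_{-w}^2(\mathbb{R}^3)^4)$ and depend continuously on the spectral parameter in the closed half-planes $\overline{\mathbb{C}_\pm}\setminus\{0\}$. Elliptic regularity (commuting a power of $(1-\Delta)^{1/2}$ through the Helmholtz resolvent and controlling the weighted commutator), together with duality and interpolation, will upgrade this to
\begin{equation*}
(-\Delta-\zeta\mp i0)^{-1}\in\mathsf{B}\bigl(H_w^{-s}(\mathbb{R}^3)^4,H_{-w}^{-s+2}(\mathbb{R}^3)^4\bigr),\qquad s\in[0,1],
\end{equation*}
with continuous dependence on $\zeta\in\overline{\mathbb{C}_\pm}\setminus\{0\}$. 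Composing with the continuous family $A_0+z\in\mathsf{B}(H_{-w}^{-s+2}(\mathbb{R}^3)^4,H_{-w}^{-s+1}(\mathbb{R}^3)^4)$ from \eqref{aow} then produces the limits $R_\lambda^{0,\pm}$, the desired mapping property, and the continuity of $z\mapsto R_z^{0,\pm}$ on $\overline{\mathbb{C}_\pm}\setminus\{-1,1\}$. For $(A_0-\lambda)R_\lambda^{0,\pm}=I_4$ I would pass to the limit $\varepsilon\searrow 0$ in $(A_0-z)R_z^0=I_4$, using \eqref{aow} one more time to view $A_0-z$ as a continuous family in $\mathsf{B}(H_{-w}^{-s+1}(\mathbb{R}^3)^4,H_{-w}^{-s}(\mathbb{R}^3)^4)$ and combining with the continuity of $R_z^{0,\pm}$ just established.

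The hardest part, I expect, will be lifting the integer-order Helmholtz limiting absorption principle to the full fractional, negative-index weighted Sobolev scale required by the statement. This calls for a careful interplay between elliptic regularity, duality (using $(H_w^s)'=H_{-w}^{-s}$ and the fact that the $-i0$ resolvent is the adjoint of the $+i0$ resolvent), and the interpolation definition of $H_w^s$ from Definition~\ref{Notation 1}, together with a verification that the weighted interpolation identity $[H_w^{t_0},H_w^{t_1}]_\theta=H_w^s$ is compatible with the family of resolvents under consideration. The remaining ingredients are either entirely classical or already contained in \eqref{aow} and Lemma~\ref{Lemma_Res_map}.
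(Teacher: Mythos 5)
Your proposal is correct and follows essentially the same route as the paper: the factorization $R_z^0=(A_0+z)\,r^0_{z^2-1}\,I_4$ coming from $A_0^2=(-\Delta+1)I_4$, reduction to the weighted resolvent bounds and the Agmon limiting absorption principle for the Laplacian (the paper's Lemma~\ref{Lemma_rz0}), duality and interpolation to cover the fractional scale, and composition with the bounded family $A_0+z$ from \eqref{aow}. The only difference is cosmetic: the paper imports the scalar weighted estimates from the literature (Agmon's inequality $\|f\|_{H^2_{-w}}\le C\|(-\Delta+z)f\|_{L^2_w}$ and \cite[eq.~(4.8)]{MaPoSi18}) rather than re-deriving them via Green's function or commutator arguments.
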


The proof of Proposition~\ref{Proposition_Res_D} below is making use of the
mapping properties of the resolvent of the Laplacian. More precisely, let
$-\Delta$ denote the self-adjoint Laplace operator in $L^{2}(\mathbb{R}^{3})$
defined on $H^{2}(\mathbb{R}^{3})$ and
\begin{equation*}
r_{z}^{0}:=\left(  -\Delta-z\right)^{-1},\qquad z\in\mathbb{C}\setminus [0, \infty).
\end{equation*}
We first recall some known mapping properties of $r_{z}^{0}$.

\begin{lemma}
\label{Lemma_rz0} The resolvent $r_{z}^{0}$ of the free Laplacian $-\Delta$
has the following properties.

\begin{itemize}
\item[$(i)$] For $w\in\mathbb{R}$ and $s\in\left[  0,2\right]$ we have $r_{z}^{0}\in\mathsf{B}(H_{w}^{-s}(\mathbb{R}^{3})  ,H_{w}^{-s+2}(\mathbb{R}^{3}) )$, 
$z\in\mathbb{C}\setminus [0, \infty)$.

\item[$(ii)$] For $w>1/2$ and $s\in\lbrack0,2]$ the limits
\begin{equation*}
r_{\lambda}^{0,\pm}:=\lim_{\varepsilon\searrow0}r_{
\lambda\pm i\varepsilon}^{0}\,,\quad\lambda>0,
\end{equation*}
exist in $\mathsf{B}(  H_{w}^{-s}(\mathbb{R}^{3}), H_{-w}^{-s+2}(\mathbb{R}^{3}) )$ and the maps
\begin{equation*}
  z \mapsto r_z^{0, \pm} := \begin{cases} r_z^0, & z \in \mathbb{C} \setminus [0, \infty), \\ r_\lambda^{0, \pm}, & z=\lambda \in (0, \infty), \end{cases}
\end{equation*}
are continuous from $\overline{\mathbb{C}_\pm} \setminus \{ 0 \}$ to $\mathsf{B}(H_{w}^{-s}( \mathbb{R}^{3}), H_{-w}^{-s+2}(\mathbb{R}^{3}))$. Moreover, each limit
$r_{\lambda}^{0,\pm}$ defines a right inverse of $\left(  -\Delta
-\lambda\right)  $, i.e.%
\begin{equation*}
\left(  -\Delta-\lambda\right)  r_{\lambda}^{0,\pm}= I_1.
\end{equation*}
\end{itemize}
\end{lemma}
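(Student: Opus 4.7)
The plan is to first establish part~(i) by a Fourier multiplier / explicit-kernel argument, and then to derive part~(ii) from the classical Agmon limiting absorption principle, combined with interpolation and duality to cover all $s\in[0,2]$.

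For part~(i), I would begin with the unweighted case $w=0$: on the Fourier side $r_z^0$ is the multiplier with symbol $(|\xi|^2-z)^{-1}$, which for $z\in\mathbb{C}\setminus[0,\infty)$ is smooth and satisfies $|(|\xi|^2-z)^{-1}|\le C_z\langle\xi\rangle^{-2}$. By Plancherel this immediately gives $r_z^0\in\mathsf{B}(H^{-s}(\mathbb{R}^3),H^{-s+2}(\mathbb{R}^3))$ for every $s\in\mathbb{R}$, in particular for $s\in[0,2]$. To put in the weight I would exploit the explicit convolution kernel $G_z(x-y)=e^{i\sqrt{z}\,|x-y|}/(4\pi|x-y|)$ with $\mathrm{Im}\,\sqrt{z}>0$: its exponential decay at infinity for $z\notin[0,\infty)$ yields via Schur's test the bound $\|\langle x\rangle^{w} r_z^0\langle\cdot\rangle^{-w}f\|_{L^2}\le C_{z,w}\|f\|_{L^2}$ for every $w\in\mathbb{R}$, and differentiating the kernel twice gives the analogous bound for $\nabla^2 r_z^0$, so that $r_z^0\in\mathsf{B}(L_w^2,H_w^2)$. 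Duality then yields $r_z^0\in\mathsf{B}(H_w^{-2},L_w^2)$, and complex interpolation as in the definition of the weighted spaces fills in the intermediate values $s\in(0,2)$.

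For part~(ii), for $\varepsilon>0$ and $\lambda>0$ the kernel of $r_{\lambda\pm i\varepsilon}^0$ equals $e^{ik_\varepsilon^\pm|x-y|}/(4\pi|x-y|)$, where $k_\varepsilon^\pm=\sqrt{\lambda\pm i\varepsilon}$ is chosen so that $\mathrm{Im}\,k_\varepsilon^\pm>0$. As $\varepsilon\searrow0$ these kernels converge pointwise to $e^{\pm i\sqrt{\lambda}\,|x-y|}/(4\pi|x-y|)$, which is locally integrable but no longer decays at infinity. Agmon's theorem asserts that the corresponding integral operator is nevertheless bounded from $L_w^2(\mathbb{R}^3)$ to $L_{-w}^2(\mathbb{R}^3)$ for $w>1/2$, with norm bounded locally uniformly for $\lambda$ in compact subsets of $(0,\infty)$; the proof I would follow decomposes the kernel into short- and long-range parts and estimates the long-range part via trace estimates for the Fourier transform on the sphere $\{|\xi|^2=\lambda\}$. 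Dominated convergence then upgrades the pointwise convergence of the kernels to convergence in $\mathsf{B}(L_w^2,L_{-w}^2)$, and joint continuity of the kernel in $(z,x,y)$ off the diagonal, together with the uniform bound, yields continuity of the map $z\mapsto r_z^{0,\pm}$ on $\overline{\mathbb{C}_\pm}\setminus\{0\}$. Interpolating this endpoint bound against the unweighted $H^{-2}\to L^2$ bound from part~(i) (via the duality $(H_w^s)'=H_{-w}^{-s}$) extends the statement to all $s\in[0,2]$. Finally, $(-\Delta-\lambda)r_\lambda^{0,\pm}=I_1$ follows by passing to the limit $\varepsilon\searrow0$ in $(-\Delta-(\lambda\pm i\varepsilon))r_{\lambda\pm i\varepsilon}^0=I_1$, using continuity of $-\Delta\in\mathsf{B}(H_{-w}^{-s+2},H_{-w}^{-s})$.

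The hard part will be the weighted endpoint estimate for the limiting convolution kernels, since this is where the threshold $w>1/2$ enters (it is sharp, governed by Fourier restriction to a sphere) and where the proof genuinely leaves the realm of elementary convolution arguments. Once this single Agmon-type estimate is in place, all remaining pieces---interpolation, duality, continuity in $\lambda$, and the right-inverse identity---are routine extensions of the unweighted result.
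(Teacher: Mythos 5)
Your overall strategy---Agmon's limiting absorption principle for the endpoint estimate, then duality and interpolation to cover $s\in[0,2]$---is the same as the paper's, which simply cites \cite[equation~(4.8)]{MaPoSi18} (or Agmon's weighted estimate (A.17)) for part~(i) and \cite[Theorem~4.1]{Ag} for part~(ii). Your part~(i) is essentially sound and more self-contained than the paper's; the only quibble is that ``differentiating the kernel twice'' produces a kernel of order $|x-y|^{-3}$, which is not locally integrable in $\mathbb{R}^3$, so Schur's test does not apply to the second derivatives directly. The clean fix is the one the paper hints at: combine the $L^2_w\to L^2_w$ bound with the equation $-\Delta u=zu+f$ and weighted elliptic regularity to reach $L^2_w\to H^2_w$, then dualize and interpolate.

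The genuine gap is in part~(ii). At the endpoint $s=0$ the statement requires $r_\lambda^{0,\pm}\in\mathsf{B}(L^2_w(\mathbb{R}^3),H^2_{-w}(\mathbb{R}^3))$, i.e.\ a gain of two derivatives in the weighted limit, whereas your kernel argument only delivers $\mathsf{B}(L^2_w(\mathbb{R}^3),L^2_{-w}(\mathbb{R}^3))$. You then propose to interpolate against ``the unweighted $H^{-2}\to L^2$ bound from part~(i)'', but that bound is not available for the boundary values $r_\lambda^{0,\pm}$ with $\lambda>0$: since $\lambda$ lies in the spectrum of $-\Delta$, the limit operators are unbounded in every unweighted topology and cannot serve as an interpolation endpoint; moreover, interpolating a zero-derivative-gain weighted bound with a two-derivative-gain unweighted bound would only produce partially weighted intermediate spaces, not $\mathsf{B}(H_w^{-s}(\mathbb{R}^3),H_{-w}^{-s+2}(\mathbb{R}^3))$ with the full weight $w$ for every $s$. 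What is actually needed---and what the paper isolates as ``the main ingredient''---is the weighted a priori inequality $\|f\|_{H^2_{-w}(\mathbb{R}^3)}\le C\,\|(-\Delta-z)f\|_{L^2_w(\mathbb{R}^3)}$ of \cite[Lemma~4.1]{Ag}, uniform for $|z|\in[K^{-1},K]$ and $w>\tfrac12$. This gives the full $s=0$ endpoint $\mathsf{B}(L^2_w,H^2_{-w})$; its adjoint, using $(r_\lambda^{0,\pm})^*=r_\lambda^{0,\mp}$ and $(H^2_{-w})'=H^{-2}_w$, gives the $s=2$ endpoint $\mathsf{B}(H^{-2}_w,L^2_{-w})$, and interpolation between these two \emph{weighted} endpoints yields the whole family. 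Finally, the ``dominated convergence'' step is not literal either: the dominating kernel $\langle x\rangle^{-w}|x-y|^{-1}\langle y\rangle^{-w}$ is not Hilbert--Schmidt for $\tfrac12<w\le 1$, so norm convergence of $r^0_{\lambda\pm i\varepsilon}$ as $\varepsilon\searrow0$ really does come from the trace/restriction machinery (H\"older continuity of $z\mapsto r_z^0$ up to the boundary in the weighted topology), not from pointwise convergence of the kernels.
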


\begin{proof}
$(i)$ By \cite[equation~(4.8)]{MaPoSi18} we have
\begin{equation}\label{4.8}
r_{z}^{0}\in\mathsf{B}\left(  L_{w}^{2}(\mathbb{R}^{3}), H_{w}^{2}(\mathbb{R}^{3})  \right),\qquad w\in\mathbb R,\,\,\,z\in\mathbb{C}\setminus [0, \infty),
\end{equation}
(alternatively, \eqref{4.8} can be proved starting from the obvious unweighted estimate 
$\|f\|_{{H^{2}(\RE^{3})}}\le C\,\|(-\Delta+z)f\|_{{L^{2}(\RE^{3})}}$ and then passing to the 
weighted one by using \cite[estimate (A.17)]{Ag}). Thus, by
duality we conclude $r_{z}^{0}\in\mathsf{B}(H_{w}^{-2}(\mathbb{R}^{3}), L_{w}^{2}(\mathbb{R}^{3}))$ and hence,
by interpolation $r_{z}^{0}\in\mathsf{B}(H_{w}^{-s}(\mathbb{R}^{3}), H_{w}^{-s+2}(\mathbb{R}^{3}))$
for all $s\in [0,2]$, $w\in\mathbb R$, and $z\in\mathbb{C}\setminus [0, \infty)$.

Assertion $(ii)$ can be shown in the same way as item~$(i)$ using \cite[Theorem~4.1]{Ag}, see also \cite[Theorem~18.3]{KK}, for $s=0$, duality for $s=-2$, 
and an interpolation argument for $s \in (-2, 0)$. The main ingredient in the quoted theorem is the weighted inequality $\|f\|_{H^{2}_{-w}(\RE^{3})}\le C\,\|(-\Delta+z)f\|_{L^{2}_{w}(\RE^{3})}$, which holds  for any $f\in H^{2}(\RE^{3})$, $z\in\CO$, $|z|\in [K^{-1},K]$, and a constant $C$ depending only on $w$ and $K$, whenever $w>\frac12$ and $K>1$ (see \cite[Lemma 4.1]{Ag}).
\end{proof}

\begin{proof}[Proof of Proposition \ref{Proposition_Res_D}]
For $z\in\mathsf{res}\left(
A_{0}\right)  $ we make use of the identity (see, e.g. \cite[eq. (1.3)]{BaHe})
\begin{equation*}
\left(  A_{0}-z\right)  \left(  A_{0}+z\right)  =(-\Delta+1-z^{2})  I_{4},
\end{equation*}
which leads to
\begin{equation*}
R_{z}^{0}=\left(  A_{0}+z\right)  r_{\left(  z^{2}-1\right)  }^{0}I_{4}.
\end{equation*}
Note that $z\in(-\infty,-1)\cup(1,\infty)$ if and only if $(z^{2}-1)>0$.
Now assertions $(i)$-$(ii)$ follow from items $(i)$-$(ii)$ in
Lemma~\ref{Lemma_rz0} and \eqref{aow}.
\end{proof}

Finally, we consider the symmetric restriction $S$ of $A_0$ to $H^{1}_0(\mathbb{R}^{3}\setminus\Gamma)^4$, that is, 
\begin{equation*}
S=-i\sum_{j=1}^{3}\alpha_{j}\partial_{j}+\beta,\quad\dom  S  =\bigl\{f\in H^{1}(\mathbb{R}^{3})^4:\gamma_0 f=0\bigr\}.
\end{equation*}
In Section~\ref{section_delta_op} we define Dirac operators $A_{\eta, \tau}$ with $\delta$-interactions as self-adjoint extensions of $S$. It can be 
shown that the adjoint $S^*$ has the form 
\begin{equation} \label{def_S_star}
  \begin{split}
    \dom   S^* &= \big\{ f = f_+ \oplus f_- \in L^2(\Omega_+)^4 \oplus L^2(\Omega_-)^4: \alpha \cdot \nabla f_\pm \in L^2(\Omega_\pm)^4 \big\}, \\
    S^* f &= (-i \alpha \cdot \nabla + \beta) f_+ \oplus (-i \alpha \cdot \nabla + \beta) f_-,
  \end{split}
\end{equation}
where the derivatives are understood in the distributional sense, cf. \cite[Proposition~3.1]{BeHo}.
In the next lemma we recall a result on the extension of the trace maps $\gamma_0^\pm$ from \cite[Proposition~2.1]{OV17}, see also \cite[Lemma~4.3]{BeHo}. In the formulation of the result we use for a function $f \in L^2(\mathbb{R}^3)^4$ the notation $f_\pm = f \upharpoonright \Omega_\pm$.

\begin{lemma}\label{Lemma_trace}
The trace map 
\begin{equation*}
  \gamma_0^\pm: H^1(\mathbb{R}^3 \setminus \Gamma)^4 = H^1(\Omega_+)^4 \oplus H^1(\Omega_-)^4 \rightarrow H^{1/2}(\Gamma)^4, 
  \quad \gamma_0^\pm (f_+ \oplus f_-) = f_\pm|_\Gamma,
\end{equation*}
extends by continuity to 
\begin{equation*}
  \gamma_{0}^\pm\in\mathsf{B}\bigl(\dom   S^{\ast}, H^{-1/2}(\Gamma)^4  \bigr),
\end{equation*}
where $\dom   S^*$ is equipped with the graph norm of $S^*$.
\end{lemma}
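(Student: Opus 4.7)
The plan is to define $\gamma_0^\pm f$, for $f\in\dom S^*$, as elements of $H^{-1/2}(\Gamma)^4$ through a weak Green's identity in $\Omega_\pm$, and then to check that for $f\in H^1(\mathbb{R}^3\setminus\Gamma)^4$ this definition reduces to the classical one-sided trace.

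First, I would fix a continuous right inverse $E\in\mathsf{B}(H^{1/2}(\Gamma)^4,H^1(\mathbb{R}^3)^4)$ of $\gamma_0$, whose existence is ensured by the surjectivity of the trace recalled in Section~\ref{section_preliminaries}. For $f=f_+\oplus f_-\in\dom S^*$ and $\varphi\in H^{1/2}(\Gamma)^4$, with $g:=E\varphi$, I would set
\[
\langle T^\pm f,\,\varphi\rangle := \pm\,i\bigl[(S^* f,\,g)_{L^2(\Omega_\pm)^4}-(f_\pm,\,A_0 g)_{L^2(\Omega_\pm)^4}\bigr].
\]
The main point to verify here is independence of the extension $g$: if $\gamma_0 g=0$, then $g_\pm\in H^1_0(\Omega_\pm)^4$ can be approximated in $H^1$-norm by $C_c^\infty(\Omega_\pm)^4$-functions, and the bracket vanishes on such test functions by the distributional formula \eqref{def_S_star} for $S^*$; continuity of $A_0\in\mathsf{B}(H^1(\mathbb{R}^3)^4,L^2(\mathbb{R}^3)^4)$ then passes the identity to the limit. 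The estimate
\[
|\langle T^\pm f,\varphi\rangle|\le\bigl(\|S^*f\|_{L^2}+\|A_0\|\,\|f_\pm\|_{L^2}\bigr)\|E\varphi\|_{H^1(\mathbb{R}^3)^4}\le C\,\|f\|_{\dom S^*}\,\|\varphi\|_{H^{1/2}(\Gamma)^4}
\]
then gives $T^\pm\in\mathsf{B}(\dom S^*,H^{-1/2}(\Gamma)^4)$.

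Next, I would use the anticommutation relations $\alpha_j\alpha_k+\alpha_k\alpha_j=2\delta_{jk}I_4$ to deduce $(\alpha\cdot\nu)^2=|\nu|^2 I_4=I_4$ pointwise on $\Gamma$. Since $\Gamma$ is $C^2$, the unit normal $\nu$ has $C^1$-components, so multiplication by $\alpha\cdot\nu$ is a bounded involution on $H^s(\Gamma)^4$ for $|s|\le 1/2$. Setting $\gamma_0^\pm f:=(\alpha\cdot\nu)\,T^\pm f$ thereby defines bounded operators $\dom S^*\to H^{-1/2}(\Gamma)^4$.

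Finally, to verify that the $\gamma_0^\pm$ so defined extend the classical one-sided traces, I would invoke the integration-by-parts identity for the Dirac operator on the $C^2$-domains $\Omega_\pm$ (outer normal $\nu$ for $\Omega_-$, and $-\nu$ for $\Omega_+$):
\[
(S^*f,\,g)_{L^2(\Omega_\pm)^4}-(f_\pm,\,A_0 g)_{L^2(\Omega_\pm)^4}=\mp\,i\,\bigl(\alpha\cdot\nu\,\gamma_0^\pm f_\pm,\,\gamma_0 g\bigr)_{L^2(\Gamma)^4},
\]
valid for $f\in H^1(\mathbb{R}^3\setminus\Gamma)^4$ and $g\in H^1(\mathbb{R}^3)^4$. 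This says $T^\pm f$ acts as the $L^2$-function $(\alpha\cdot\nu)\gamma_0^\pm f_\pm$, and multiplying by the involution $\alpha\cdot\nu$ recovers $\gamma_0^\pm f_\pm$, so the two definitions coincide on $H^1(\mathbb{R}^3\setminus\Gamma)^4$. The delicate bookkeeping will be keeping track of the signs and conjugations forced by the anti-sesquilinear duality $H^{-1/2}\times H^{1/2}$ recalled in \eqref{allesklar}; note that the $C^2$-regularity of $\Gamma$ is used precisely to ensure that $\alpha\cdot\nu$ acts boundedly on both $H^{1/2}(\Gamma)^4$ and its anti-dual.
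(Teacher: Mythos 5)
Your argument is correct. The paper itself gives no proof of Lemma~\ref{Lemma_trace} --- it simply quotes \cite[Proposition~2.1]{OV17} and \cite[Lemma~4.3]{BeHo} --- and your construction is exactly the standard one used there: define the boundary value of $f\in\dom S^*$ as the element of $H^{-1/2}(\Gamma)^4$ acting on $\varphi=\gamma_0 g$ through the Green bracket $(S^*f,g)_{L^2(\Omega_\pm)^4}-(f_\pm,A_0 g)_{L^2(\Omega_\pm)^4}$, check independence of the extension $g$ via the distributional definition \eqref{def_S_star} together with $H^1_0$-approximation, read off the bound by $\|f\|_{\dom S^*}\|E\|\,\|\varphi\|_{H^{1/2}(\Gamma)^4}$, and peel off the factor $\alpha\cdot\nu$ using $(\alpha\cdot\nu)^2=I_4$ and the boundedness of multiplication by the $C^1$-field $\nu$ on $H^{\pm 1/2}(\Gamma)^4$. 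Two small points to tidy up. First, your orientation convention is opposite to the paper's ($\nu$ points outwards of the \emph{exterior} domain $\Omega_+$ here, i.e.\ it is the inward normal for $\Omega_-$), so the signs in your Green identity and in the definition of $T^\pm$ must be adjusted accordingly; as you yourself flag, this is pure bookkeeping and does not affect boundedness. Second, the phrase ``extends by continuity'' tacitly uses that $H^1(\mathbb{R}^3\setminus\Gamma)^4$ is dense in $\dom S^*$ with respect to the graph norm (otherwise a bounded extension need not be unique); this density is standard for first-order elliptic systems on $C^2$-domains and should be mentioned, although your construction in any case produces a canonical bounded operator that agrees with the classical one-sided traces on $H^1(\mathbb{R}^3\setminus\Gamma)^4$ and satisfies the Green identity on all of $\dom S^*$, which is what the paper actually needs.
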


\subsection{Auxiliary maps and estimates} \label{section_G_z}

In this section we study the operator functions $G_z$ and $M_z$ given by
\begin{equation}\label{gz}
G_z=R_{z}^{0}\gamma_0^*\quad \text{and}\quad
M_{z}=\gamma_{0}G_{z},\qquad z\in\mathsf{res}(A_0).
\end{equation}
These operators play a crucial role in our construction in the next section. In what 
follows, we discuss their mapping properties and their limit behaviour, when the spectral parameter $z \in \mathbb{C}_\pm$ approaches the continuous spectrum.

\begin{proposition}\label{Proposition_G_Z_map}
For the operators $G_z$ in \eqref{gz} the following is true.
\begin{itemize}
\item[$(i)$] For all $z_1, z_2 \in\mathsf{res}(A_0)$
\begin{equation*}
  G_{z_1} - G_{z_2} = (z_1 - z_2) R^0_{z_1} G_{z_2} = (z_1 - z_2) R^0_{z_2} G_{z_1}
\end{equation*}
holds.
\item[$(ii)$] The map $z\rightarrow G_{z}\in\mathsf{B}(H^{-1/2}(\Gamma)^4,  L^2(\mathbb{R}^{3})^4)$ is holomorphic
on $\mathsf{res}(  A_{0})$.
\item[$(iii)$] For $w>1/2$ the limits 
\begin{equation}\label{limgz123}
G_\lambda^\pm:=\lim_{\varepsilon\searrow 0}G_{\lambda \pm i \varepsilon},\qquad \lambda\in(  -\infty,-1)  \cup(  1,\infty),
\end{equation}
exist in $\mathsf{B}(H^{-1/2}(\Gamma)^4,  L^2_{-w}(\mathbb{R}^{3})^4)$, one has 
$$G_\lambda^\pm=R_{\lambda}^{0,\pm}\gamma_0^* \in \mathsf{B}\bigl(H^{-1/2}(\Gamma)^4,  L^2_{-w}(\mathbb{R}^{3})^4\bigr),$$ 
and the maps
\begin{equation} \label{G_pm_continuous}
  z \mapsto G_z^{\pm} := \begin{cases} G_z, & z \in \mathbb{C} \setminus ( (-\infty, -1] \cup [1, \infty) ), \\ G_\lambda^{\pm}, & z=\lambda \in (-\infty, -1) \cup (1, \infty), \end{cases}
\end{equation}
are continuous from $\overline{\mathbb{C}_\pm} \setminus \{ -1, 1 \}$ to $\mathsf{B}(H^{-1/2}(\Gamma)^4,  L^2_{-w}(\mathbb{R}^{3})^4)$. 
\item[$(iv)$] For any compact  ${I} \subset \mathbb{R} \setminus \{ -1, 1 \}$
\begin{equation}\label{sup_G_z}
\sup_{(\lambda,\varepsilon)\in I\times (0,1)}\sqrt\varepsilon\,\|G_{\lambda\pm i\varepsilon}\|_{H^{-1/2}(\Gamma)^4, L^{2}(\mathbb{R}^3)^4}<\infty
\end{equation}
holds.
\item[$(v)$] The dual $G_z^* \in \mathsf{B}(L^2(\mathbb{R}^{3})^4, H^{1/2}(\Gamma)^4)$ of $G_z$ is  given by
\begin{equation*}
  G_z^*: L^2(\mathbb{R}^3)^4 \rightarrow H^{1/2}(\Gamma)^4, \qquad G_z^* f = \gamma_0 R^0_{\Bar z} f,
\end{equation*}
and the map $\mathsf{res}(  A_{0}) \ni z \mapsto G_{\Bar z}^*$ is holomorphic in $\mathsf{B}(L^2(\mathbb{R}^{3})^4, H^{1/2}(\Gamma)^4)$.
\item[$(vi)$] For $w>1/2$ the limits 
\begin{equation} \label{lim_G_star}
(G_\lambda^\pm)^*:=\lim_{\varepsilon \searrow 0}(G_{\lambda \pm i \varepsilon})^*,\qquad \lambda\in(  -\infty,-1)  \cup(  1,\infty),
\end{equation}
exist in $\mathsf{B}(L^2_{w}(\mathbb{R}^{3})^4, H^{1/2}(\Gamma)^4)$,  one has
$$(G_\lambda^\pm)^*=\gamma_0 R_{\lambda}^{0,\mp} \in \mathsf{B}\bigl(L^2_{w}(\mathbb{R}^{3})^4, H^{1/2}(\Gamma)^4\bigr),$$ 
and the maps
\begin{equation*}
  z \mapsto (G_z^{\pm})^* := \begin{cases} G_z^*, & z \in \mathbb{C} \setminus ( (-\infty, -1] \cup [1, \infty) ), \\ (G_\lambda^{\pm})^*, & z=\lambda \in (-\infty, -1) \cup (1, \infty), \end{cases}
\end{equation*}
are continuous from $\overline{\mathbb{C}_\mp} \setminus \{ -1, 1 \}$ to $\mathsf{B}(L^2_{w}(\mathbb{R}^{3})^4, H^{1/2}(\Gamma)^4)$.
\end{itemize}
\end{proposition}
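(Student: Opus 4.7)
The plan is to treat items (i)--(iii), (v), (vi) as direct consequences of the resolvent mapping and continuity properties established in Lemma~\ref{Lemma_Res_map}, Proposition~\ref{Proposition_Res_D}, and the trace bound \eqref{allesklar}, and to devote the main effort to the uniform estimate in (iv).

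Item (i) is the first resolvent identity applied to $\gamma_0^\ast$. For (ii), setting $s=1/2$ and $w=0$ in \eqref{allesklar} yields $\gamma_0^\ast \in \mathsf{B}(H^{-1/2}(\Gamma)^4, H^{-1}(\mathbb{R}^3)^4)$; composing with Lemma~\ref{Lemma_Res_map} at $s=1$ gives the claim. Item (iii) is the same composition with Proposition~\ref{Proposition_Res_D}(ii) at $s=1$ and compatibly chosen weights in \eqref{allesklar}, and the formula $G_\lambda^\pm = R_\lambda^{0,\pm}\gamma_0^\ast$ follows by passing to the limit in the definition. For (v), I verify the anti-dual directly: for $f\in H^{-1/2}(\Gamma)^4$ and $v\in L^2(\mathbb{R}^3)^4$,
\begin{equation*}
(G_z f, v)_{L^2} = \langle \gamma_0^\ast f, R_{\bar z}^0 v\rangle_{H^{-1}\times H^1} = (f,\gamma_0 R_{\bar z}^0 v)_{H^{-1/2}(\Gamma)^4\times H^{1/2}(\Gamma)^4},
\end{equation*}
the first equality being valid because $R_{\bar z}^0 v \in H^1(\mathbb{R}^3)^4$ for $v\in L^2(\mathbb{R}^3)^4$. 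Holomorphy of $z\mapsto G_{\bar z}^\ast=\gamma_0 R_z^0$ then follows from Lemma~\ref{Lemma_Res_map} at $s=0$ composed with the trace $\gamma_0 \in \mathsf{B}(H^1(\mathbb{R}^3)^4, H^{1/2}(\Gamma)^4)$, and (vi) mirrors (iii), using that $\gamma_0$ is bounded from $H_{-w}^1(\mathbb{R}^3)^4$ to $H^{1/2}(\Gamma)^4$ for every $w\in\mathbb{R}$ since $\Gamma$ is compact.

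The main technical obstacle is (iv). My strategy is to reduce it to a limiting absorption estimate for the Helmholtz single-layer potential through the factorization $R_z^0 = (A_0+z)\, r_{z^2-1}^0\, I_4$ already used in the proof of Proposition~\ref{Proposition_Res_D}. This gives
\begin{equation*}
G_z = (A_0+z)\,\mathcal{S}_{z^2-1}, \qquad \mathcal{S}_\zeta := r_\zeta^0\gamma_0^\ast \quad (\text{acting componentwise}),
\end{equation*}
and hence $\|G_z f\|_{L^2(\mathbb{R}^3)^4} \le \|A_0+z\|_{H^1\to L^2}\,\|\mathcal{S}_{z^2-1}f\|_{H^1(\mathbb{R}^3)^4}$, where the prefactor is uniformly bounded for $z$ in $I \pm i(0,1)$. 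For such $z$ one has $|\mathrm{Im}(z^2-1)| = 2|\lambda|\varepsilon$ comparable to $\varepsilon$, so it suffices to prove the Helmholtz single-layer bound
\begin{equation*}
\sup_{\zeta\in K} \sqrt{|\mathrm{Im}\,\zeta|}\;\|\mathcal{S}_\zeta\|_{H^{-1/2}(\Gamma)^4,\, H^1(\mathbb{R}^3)^4} < \infty
\end{equation*}
on compact sets $K \subset \overline{\mathbb{C}_\pm}\setminus\{0\}$ whose real part lies in $(0,\infty)$. This estimate can be established by a direct Fourier analysis: writing $\|\mathcal{S}_\zeta f\|_{H^1}^2$ as the integral of $(1+|\xi|^2)\,|\widehat{f\sigma_\Gamma}(\xi)|^2/||\xi|^2-\zeta|^2$, the singular contribution concentrates near $|\xi|^2 = \mathrm{Re}\,\zeta$ and is of order $1/|\mathrm{Im}\,\zeta|$; the resulting bound on the spherical integral of $|\widehat{f\sigma_\Gamma}|^2$ is controlled by $\|f\|_{H^{-1/2}(\Gamma)}$ via a restriction-type estimate that exploits the $C^2$-smoothness of $\Gamma$. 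This Fourier-side estimate is the one step that does not follow mechanically from the preceding lemmas.
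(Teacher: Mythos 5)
Your treatment of items $(i)$--$(iii)$, $(v)$, $(vi)$ coincides with the paper's: everything there is obtained by composing $\gamma_0^*$ from \eqref{allesklar} (with $s=1/2$ and suitable $w$) with Lemma~\ref{Lemma_Res_map} and Proposition~\ref{Proposition_Res_D}~$(ii)$ at $s=1$, plus duality; your explicit verification of the dual formula in $(v)$ is fine. The divergence is in $(iv)$. The paper disposes of $(iv)$ in one line by citing estimate (3.16) of \cite{MaPo} together with \eqref{R_pm_continuous}; unwound, that argument is the quadratic-form identity
\begin{equation*}
\|G_{\lambda\pm i\varepsilon}f\|_{L^{2}(\mathbb{R}^3)^4}^{2}
=\frac{1}{2\varepsilon}\,\bigl|\bigl\langle \gamma_0^*f,\bigl(R^{0}_{\lambda+i\varepsilon}-R^{0}_{\lambda-i\varepsilon}\bigr)\gamma_0^*f\bigr\rangle_{H_w^{-1}(\mathbb{R}^3)^4\times H_{-w}^{1}(\mathbb{R}^3)^4}\bigr|,
\end{equation*}
which, combined with \eqref{allesklar} and the uniform boundedness of $R_z^{0,\pm}$ in $\mathsf{B}(H_w^{-1}(\mathbb{R}^3)^4,H_{-w}^{1}(\mathbb{R}^3)^4)$ on $I\times(0,1)$ from \eqref{R_pm_continuous}, immediately gives \eqref{sup_G_z} using only material already established. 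This shortcut is available to you and requires no new analysis.

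Your alternative route for $(iv)$ through the Helmholtz single-layer potential is structurally sound (the factorization $G_z=(A_0+z)\,r^0_{z^2-1}\gamma_0^*$ is legitimate, and away from the shell $|\xi|^2=\mathrm{Re}\,\zeta$ the integrand is controlled by $\|\gamma_0^*f\|_{H^{-1}}\lesssim\|f\|_{H^{-1/2}(\Gamma)^4}$), but it contains one genuine gap: the claim that the spherical integral of $|\widehat{f\sigma_\Gamma}|^2$ is bounded by $\|f\|^2_{H^{-1/2}(\Gamma)}$ is precisely the hard point, and it is not a routine Fourier computation. For $L^2(\Gamma)$ densities this is the Agmon--H\"ormander trace estimate; for $H^{-1/2}(\Gamma)$ densities one needs in addition a duality argument (boundedness of the Herglotz extension $L^2(\mathbb{S}^2_r)\to H^1_{\mathrm{loc}}$ followed by the trace theorem onto $\Gamma$), which is essentially equivalent to the uniform $H^{-1/2}(\Gamma)\to H^{1/2}(\Gamma)$ bound on $\mathrm{Im}\,\bigl(\gamma_0 r_\lambda^{0,+}\gamma_0^*\bigr)$ --- i.e.\ to a statement of the same depth as the one being proved. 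As written, that step is asserted rather than established. A minor additional point: your reduction via $|\mathrm{Im}(z^2-1)|=2|\lambda|\varepsilon$ only covers $|\lambda|>1$; for the portion of $I$ inside $(-1,1)$ the parameter $z^2-1$ stays in a compact subset of $\mathbb{C}\setminus[0,\infty)$ and no $\sqrt{\varepsilon}$ factor is needed, so this case should be split off explicitly.
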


\begin{proof}
Item $(i)$ is a simple consequence of the definition of $G_z$ in~\eqref{gz} and the resolvent identity.

$(ii)$ By Lemma~\ref{Lemma_Res_map} applied for $s=1$ the map $z\mapsto R_z^0\in\mathsf{B}(H^{-1}(\mathbb R^3)^4, L^2(\mathbb R^3)^4)$ is holomorphic. 
Together with \eqref{allesklar} for $w=0$ and $s=1/2$ we conclude $(ii)$.

$(iii)$ For $w>1/2$, $s\in [0,1]$, and $\lambda\in(-\infty,-1)\cup(1,\infty)$ 
the limits $R_\lambda^{0,\pm}=\lim_{\varepsilon \searrow 0}R_{\lambda \pm i \varepsilon}$
exist in $\mathsf{B}(H_w^{-1}(\mathbb R^3)^4, L^2_{-w}(\mathbb R^3)^4)$ according to 
Proposition~\ref{Proposition_Res_D}~$(ii)$, again applied with $s=1$. From \eqref{allesklar} with $s=1/2$ we conclude that the limits 
\begin{equation*}
G_\lambda^\pm=
\lim_{\varepsilon \searrow 0}G_{\lambda \pm i \varepsilon}=
\lim_{\varepsilon \searrow 0}R_{\lambda \pm i \varepsilon}^0\gamma_0^*
\end{equation*}
exist in $\mathsf{B}(H^{-1/2}(\Gamma)^4, L^2_{-w}(\mathbb R^3)^4)$ and one has
\begin{equation*}
 G_\lambda^\pm=R_\lambda^{0,\pm}\gamma_0^* \in\mathsf{B}\bigl(H^{-1/2}(\Gamma)^4,  L^2_{-w}(  \mathbb{R}^{3})^4 \bigr),\qquad w>1/2.
\end{equation*}
Therefore, the continuity in~\eqref{G_pm_continuous} is a simple consequence of~\eqref{R_pm_continuous} for $s=1$. 

$(iv)$ The claim is a consequence of the limiting absorption principle for $G_z$. It follows from the estimate 
(3.16) in \cite{MaPo} and \eqref{R_pm_continuous}.

$(v)$-$ (vi)$ The claims follow directly from $(ii)$ and $(iii)$ by duality.
\end{proof}

Next, we discuss the operators $M_z$ which are formally given by~\eqref{gz}. 

\begin{proposition}\label{Proposition_m_Z_map}
For the operators $M_z$ in \eqref{gz} the following is true.
\begin{itemize}
\item[$(i)$] For all $z\in \mathsf{res}(  A_{0})$ one has  $M_z \in \mathsf{B}(H^{-1/2}(\Gamma)^4)$.
\item[$(ii)$] For all $z_1, z_2 \in \mathsf{res}(  A_{0})$
\begin{equation*}
  M_{z_1} - M_{z_2} = (z_1 - z_2) G_{\Bar z_1}^* G_{z_2} = (z_1 - z_2) G_{\Bar z_2}^* G_{z_1}
\end{equation*}
holds.
\item[$(iii)$] The map $z\rightarrow M_{z}\in\mathsf{B}(H^{-1/2}(\Gamma)^4)$ is holomorphic
on $\mathsf{res}(  A_{0})$.
\item[$(iv)$] The limits 
\begin{equation}\label{limmz123}
M_\lambda^\pm:=\lim_{\varepsilon\searrow 0}M_{\lambda \pm i \varepsilon},\qquad \lambda\in(  -\infty,-1)  \cup(  1,\infty),
\end{equation}
exist in $\mathsf{B}(H^{-1/2}(\Gamma)^4)$ and the maps
\begin{equation*} 
  z \mapsto M_z^{\pm} := \begin{cases} M_z, & z \in \mathbb{C} \setminus ( (-\infty, -1] \cup [1, \infty) ), \\ M_\lambda^{\pm}, & z=\lambda \in (-\infty, -1) \cup (1, \infty), \end{cases}
\end{equation*}
are continuous from $\overline{\mathbb{C}_\pm} \setminus \{ -1, 1 \}$ to $\mathsf{B}(H^{-1/2}(\Gamma)^4)$.
    \item[$(v)$] The operator $M_z^2 - \frac{1}{4} I_4$ gives rise to a bounded operator
    \begin{equation*}
      M_z^2 - \frac{1}{4} I_4: H^{-1/2}(\Gamma)^4 \rightarrow H^{1/2}(\Gamma)^4.
    \end{equation*}
    \item[$(vi)$] The operator $\beta M_z + M_z \beta$ gives rise to a bounded operator
    \begin{equation*}
      \beta M_z + M_z \beta: H^{-1/2}(\Gamma)^4 \rightarrow H^{1/2}(\Gamma)^4.
    \end{equation*}
\end{itemize}
\end{proposition}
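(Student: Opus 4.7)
The plan is to deduce the six assertions in sequence by pushing the properties of $G_z$ and $G_{\bar z}^*$ from Proposition~\ref{Proposition_G_Z_map} through the identity $M_z=\gamma_0 G_z$. For item~(i), I would observe that for $\varphi\in H^{-1/2}(\Gamma)^4$ the function $u=G_z\varphi$ lies in $L^2(\mathbb R^3)^4$ by Proposition~\ref{Proposition_G_Z_map}(ii); since $(A_0-z)u=\gamma_0^*\varphi$ is supported on $\Gamma$, the pieces $u_\pm$ satisfy the free Dirac equation in $\Omega_\pm$ distributionally, so $u\in\dom S^*$ with $\|S^*u\|_{L^2}=|z|\,\|u\|_{L^2}$; Lemma~\ref{Lemma_trace} then yields $M_z=\gamma_0 G_z\in\mathsf B(H^{-1/2}(\Gamma)^4)$. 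Item~(ii) is obtained by applying $\gamma_0$ to the identities in Proposition~\ref{Proposition_G_Z_map}(i) and using $\gamma_0 R_{z_1}^0=G_{\bar z_1}^*$ from Proposition~\ref{Proposition_G_Z_map}(v). Item~(iii) is then immediate: by~(ii) the difference quotient equals $G_{\bar{z_0+h}}^* G_{z_0}$, converging as $h\to 0$ in $\mathsf B(H^{-1/2}(\Gamma)^4,H^{1/2}(\Gamma)^4)\hookrightarrow\mathsf B(H^{-1/2}(\Gamma)^4)$.

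For item~(iv), I would fix $z_2\in\mathsf{res}(A_0)$ with $\mathrm{Im}\,z_2\neq 0$ and rewrite $M_z=M_{z_2}+(z-z_2)\,G_{\bar z}^* G_{z_2}$ via~(ii). Because $\Gamma$ is compact and the integral kernel of $R_{z_2}^0$ decays exponentially at infinity, $G_{z_2}$ extends to an element of $\mathsf B(H^{-1/2}(\Gamma)^4,L^2_w(\mathbb R^3)^4)$ for every $w>1/2$; Proposition~\ref{Proposition_G_Z_map}(vi) then supplies the boundary limit of $G_{\bar z}^*$ in $\mathsf B(L^2_w(\mathbb R^3)^4,H^{1/2}(\Gamma)^4)$ as $z\to\lambda\pm i0$, and the composition converges in the claimed topology. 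For item~(vi) I would exploit the purely algebraic identity $A_0\beta+\beta A_0=2I_4$, a consequence of $\{\alpha_j,\beta\}=0$ and $\beta^2=I_4$. Subtracting $2z\beta$ and sandwiching between two resolvents yields $R_z^0\beta+\beta R_z^0=2R_z^0(I_4-z\beta)R_z^0$, and composing with $\gamma_0^*$ on the right and $\gamma_0$ on the left (again using Proposition~\ref{Proposition_G_Z_map}(v)) produces the explicit formula
\begin{equation*}
\beta M_z+M_z\beta=2\,G_{\bar z}^*(I_4-z\beta)\,G_z,
\end{equation*}
which maps $H^{-1/2}(\Gamma)^4\to L^2(\mathbb R^3)^4\to H^{1/2}(\Gamma)^4$ as required.

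Item~(v) is the main obstacle. The natural route is via the jump relation $\gamma_0^\pm G_z\varphi=M_z\varphi\pm\frac{i}{2}(\alpha\cdot\nu)\varphi$, derived by testing $(A_0-z)G_z\varphi=\gamma_0^*\varphi$ across $\Gamma$ and matching the $\delta_\Gamma$-singularity produced by $-i(\alpha\cdot\nu)[G_z\varphi]_\Gamma$. A Calderón-type representation then shows that every decaying solution $v$ of $(A_0-z)v=0$ in $\Omega_\pm$ satisfies $v=\mp i\,G_z((\alpha\cdot\nu)\gamma_0^\pm v)$. Applying $\gamma_0^\pm$ to this formula and inserting the jump relation produces an algebraic identity which, after invoking $(\alpha\cdot\nu)^2=I_4$, expresses $M_z^2-\frac{1}{4}I_4$ as a finite combination of compositions of $G_{\bar z}^*$ with $G_z$, each of which maps $H^{-1/2}(\Gamma)^4$ into $H^{1/2}(\Gamma)^4$ by Proposition~\ref{Proposition_G_Z_map}(ii) and (v). The delicate points will be careful sign-tracking in the jump formula, restricting the representation to the actual ranges of the relevant Calderón projectors where it is exact, and passing from smooth $\varphi$ to $\varphi\in H^{-1/2}(\Gamma)^4$ by density together with~(i); for the precise algebra I would follow the computations in~\cite{BeHo}.
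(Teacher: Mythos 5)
Your proposal is correct, and for items (i)--(iv) it follows essentially the route the paper intends (the paper proves (i) by identifying $\ran G_z=\ker(S^*-z)$ and invoking Lemma~\ref{Lemma_trace}, which is the same content as your direct verification that $G_z\varphi\in\dom S^*$ with $S^*G_z\varphi=zG_z\varphi$; items (ii)--(iv) are dismissed in the paper as ``follow directly from Proposition~\ref{Proposition_G_Z_map}'', and your spelled-out versions are exactly the intended arguments -- note that the weighted boundedness $G_{z_2}\in\mathsf{B}(H^{-1/2}(\Gamma)^4,L^2_w(\mathbb{R}^3)^4)$ you need in (iv) is already available from Proposition~\ref{Proposition_Res_D}~$(i)$ with $s=1$ together with \eqref{allesklar}, so the appeal to exponential kernel decay is not needed). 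Where you genuinely diverge is item (vi): the paper cites the kernel analysis preceding \cite[Proposition~2.1]{BEHL19} together with \cite[Theorem~6.11]{McLe} (the anticommutator with $\beta$ cancels the strongly singular part of the integral kernel, leaving a weakly singular operator), whereas your identity $\beta M_z+M_z\beta=2G_{\bar z}^*(I_4-z\beta)G_z$, obtained by sandwiching $A_0\beta+\beta A_0=2I_4$ between resolvents, is a clean, self-contained operator-theoretic argument that factors through $L^2(\mathbb{R}^3)^4$ and needs no boundary-integral machinery; it is arguably preferable. For item (v) both you and the paper ultimately defer to \cite[Proposition~4.4]{BeHo}; one caveat on your sketch: the Calder\'on-type identities you describe yield $((\alpha\cdot\nu)M_z)^2=-\tfrac14 I_4$, and to convert this into the smoothing property of $M_z^2-\tfrac14 I_4$ one additionally needs that the anticommutator $(\alpha\cdot\nu)M_z+M_z(\alpha\cdot\nu)$ maps $H^{-1/2}(\Gamma)^4$ into $H^{1/2}(\Gamma)^4$ -- this is the genuinely analytic ingredient (the analogue of (vi) for the non-constant matrix $\alpha\cdot\nu$, where your algebraic trick does not apply), and it is where the $C^2$-regularity of $\Gamma$ enters; since you explicitly defer to \cite{BeHo} for the precise algebra, as the paper itself does, this is not a gap relative to the paper's own proof.
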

\begin{proof}
$(i)$ Let $S^*$ be given by~\eqref{def_S_star} and fix $z\in \mathsf{res}(  A_{0})$. From Proposition~\ref{Proposition_G_Z_map}~$(v)$ we obtain 
$\ker G_z^*=\ran(S-\Bar z)$. Moreover, as $\ran G_z^*=H^{1/2}(\Gamma)^4$ is closed also $\ran G_z$ is closed and hence 
\begin{equation*}
  \ran G_z=\bigl(\ker G_z^*\bigr)^\bot= \ker (S^*-z).
\end{equation*}
Since the graph norm of $S^*$ and the $L^2(\mathbb R^3)^4$ norm are equivalent on $\ker (S^*-z)$ we conclude
\begin{equation*}
 G_z\in\mathsf{B}\bigl(H^{-1/2}(\Gamma)^4, \dom  S^*\bigr),
\end{equation*}
when $\dom  S^*$ is equipped the graph norm of $S^*$. With the extension of the trace operator $\gamma_0 = \frac{1}{2} (\gamma_0^+ + \gamma_0^-)$ from Lemma~\ref{Lemma_trace} the claim of $(i)$ follows.
 
$(ii)$-$(iv)$ follow directly from Proposition~\ref{Proposition_G_Z_map}.

$(v)$ is shown in \cite[Proposition~4.4]{BeHo}, see also Remark~\ref{remark_gamma_tilde} below.

$(vi)$ follows from the discussion before \cite[Proposition~2.1]{BEHL19} and \cite[Theorem~6.11]{McLe}.
\end{proof}

\begin{remark} \label{remark_gamma_tilde}
  It is worth to mention that the operators $G_z$ and $M_z$ defined by~\eqref{gz} coincide with the maps $\widetilde{\gamma}(z)$ and $\widetilde{M}(z)$ 
  introduced in \cite[Proposition~4.4]{BeHo}. In fact, for $G_z$ and $\widetilde{\gamma}(z)$ this follows as their duals coincide;
  for $M_z$ and $\widetilde{M}(z)$ this follows from their definitions in~\eqref{gz} and \cite[Proposition~4.4]{BeHo}.
\end{remark}

\section{Dirac operators with electrostatic and Lorentz scalar $\delta$-shell
interactions} \label{section_delta_op}

In this section we recall the definition and some of the basic properties of
Dirac operators which are coupled with a combination of electrostatic and
Lorentz scalar $\delta$-shell potentials, as they were treated, e.g., in
\cite{AMV15, BEHL17, BEHL19}. Let $\nu$ be the unit normal vector field at
$\Gamma$ pointing outwards of $\Omega_{+}$. We define for $\eta, \tau \in \mathbb{R}$ the operator
\begin{equation}%
\begin{split}
    A_{\eta,\tau}f &:= (-i\alpha\cdot
\nabla+\beta)f_{+}\oplus(-i\alpha\cdot\nabla+\beta)f_{-}, \\
  \dom   A_{\eta,\tau} &:= \big\{
f=f_{+}\oplus f_{-}\in \dom   S^*: \\
  & \qquad \qquad  
-i(\alpha\cdot\nu)(\gamma_0^+ f-\gamma_0^-f%
)=\tfrac{1}{2} (\eta I_{4}+\tau\beta)(\gamma_0^+ f+ \gamma_0^-f)\big\},
\end{split}
\label{def_delta_op}%
\end{equation}
with $S^*$ in~\eqref{def_S_star} and $\gamma_0^\pm$ denotes the trace operator from Lemma~\ref{Lemma_trace}.
In the next proposition we recall in the case of non-critical interaction
strengths $\eta^{2}-\tau^{2}\neq4$ the qualitative spectral properties and a resolvent formula for the operator
$A_{\eta,\tau}$; cf. \cite[Lemma~3.3, Theorem~3.4, and Theorem~4.1]{BEHL19} or \cite[Theorem~4.4]{BEHL17}. 
We do not discuss the case of critical interaction strengths 
$\eta^2 - \tau^2 = 4$ here. In this situation the spectral properties of  $A_{\eta, \tau}$ are different from the non-critical case; cf. \cite{BeHo, OV17}.

\begin{proposition}
\label{proposition_basic_delta_op} Let $\eta,\tau
\in\mathbb{R}$ such that $\eta^{2}-\tau^{2}\neq4$ and let $G_{z}$ and $M_{z}$ be defined as in
\eqref{gz}. Then the operator $A_{\eta,\tau}$ in \eqref{def_delta_op} is self-adjoint in $L^2(\mathbb{R}^3)^4$ and the following is true.

\begin{itemize}
\item[$(i)$] $\sigma_{\textup{ess}}(A_{\eta,\tau%
})=\sigma_{\textup{ess}}(A_{0})=(-\infty,-1]\cup\lbrack1,\infty)$.

\item[$(ii)$] $z\in\sigma_{\textup{disc}}(A_{\eta,\tau})$ if and only if $-1\in\sigma((\eta I_{4}%
+\tau\beta)M_{z})$.

\item[$(iii)$] For $z\in\mathsf{res}(A_{\eta,\tau})$ the operator $I_4+(\eta I_{4}+\tau\beta)M_{z}$ is boundedly invertible in 
$H^{-1/2}(\Gamma)^4$ and with
\begin{equation} \label{def_Lambda}
  \Lambda^{\eta, \tau}_z := \left( I_4+(\eta I_{4}+\tau\beta)M_{z}\right)^{-1} (\eta I_{4}+\tau\beta) \in \mathsf{B}\big(  H^{-1/2}(\Gamma) \big)
\end{equation}
one has the resolvent formula
\begin{equation} \label{krein_resolvent_formula}
R_{z}^{\eta,\tau}:=(A_{\eta%
,\tau}-z)^{-1}=R_{z}^{0}-G_{z} \Lambda_z^{\eta, \tau} G_{\Bar z}^*.
\end{equation}

\item[$(iv)$] The discrete spectrum of $A_{\eta,\tau%
}$ in $(-1,1)$ is finite.
\end{itemize}
\end{proposition}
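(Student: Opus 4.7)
The natural framework is the quasi boundary triple approach from \cite{BEHL17,BEHL19}. On $\dom S^{\ast}$ (with the graph norm of $S^{\ast}$) I would introduce the boundary maps
\[
\Gamma_{0}f:=-i(\alpha\cdot\nu)\bigl(\gamma_{0}^{+}f-\gamma_{0}^{-}f\bigr),\qquad
\Gamma_{1}f:=\tfrac{1}{2}\bigl(\gamma_{0}^{+}f+\gamma_{0}^{-}f\bigr),
\]
both values in $H^{-1/2}(\Gamma)^{4}$ by Lemma~\ref{Lemma_trace}. A Green's identity on $\Omega_{\pm}$ combined with the anticommutation $\{\alpha\cdot\nu,\alpha\cdot\nu\}=2I_{4}$ shows that $\{H^{-1/2}(\Gamma)^{4},\Gamma_{0},\Gamma_{1}\}$ is a quasi boundary triple for $S^{\ast}$ with $A_{0}=S^{\ast}\upharpoonright\ker\Gamma_{0}$. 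A direct computation identifies the associated $\gamma$-field and Weyl function with the operators $G_{z}$ and $M_{z}$ in \eqref{gz} (see Remark~\ref{remark_gamma_tilde}). Writing the jump-mean boundary condition in \eqref{def_delta_op} as $\Gamma_{0}f=(\eta I_{4}+\tau\beta)\Gamma_{1}f$, I identify $A_{\eta,\tau}$ as a boundary-parameter extension of $S$.

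The central step is the invertibility of $I_{4}+(\eta I_{4}+\tau\beta)M_{z}$ in $H^{-1/2}(\Gamma)^{4}$ for at least one $z\in\mathsf{res}(A_{0})\cap\mathbb{C}_{+}$, since the abstract theory then yields self-adjointness of $A_{\eta,\tau}$ together with the Kre\u{\i}n formula \eqref{krein_resolvent_formula} with $\Lambda_{z}^{\eta,\tau}$ as in \eqref{def_Lambda}. For invertibility I would exploit Proposition~\ref{Proposition_m_Z_map}(v),(vi): a short computation gives
\[
\bigl(I_{4}-(\eta I_{4}+\tau\beta)M_{z}\bigr)\bigl(I_{4}+(\eta I_{4}+\tau\beta)M_{z}\bigr)
=I_{4}-\tfrac{\eta^{2}-\tau^{2}}{4}I_{4}+K_{z},
\]
where $K_{z}$ collects the expressions $\eta\tau(\beta M_{z}+M_{z}\beta)$ and $(\eta^{2}-\tau^{2})(M_{z}^{2}-\tfrac{1}{4}I_{4})$, both of which are compact on $H^{-1/2}(\Gamma)^{4}$ by (v)--(vi) together with the compact embedding $H^{1/2}(\Gamma)^{4}\hookrightarrow H^{-1/2}(\Gamma)^{4}$. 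Since $1-\tfrac{\eta^{2}-\tau^{2}}{4}\neq 0$ under the non-critical hypothesis, the right-hand side is a Fredholm operator of index zero; selecting $z=i t$ with $|t|$ large enough makes $M_{z}$ small in norm (this can be read off from the explicit integral kernel of $R_{z}^{0}$), so the product is invertible there, and hence each factor is. Standard arguments upgrade this to invertibility for every $z\in\mathsf{res}(A_{\eta,\tau})$, yielding (iii).

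Assertions (i), (ii), (iv) then follow by exploiting \eqref{krein_resolvent_formula}. Since $\Gamma$ is compact and $\gamma_{0}^{\ast}\colon H^{-1/2}(\Gamma)^{4}\to H^{-1}(\mathbb{R}^{3})^{4}$ has range of functions supported on $\Gamma$, the operator $G_{\bar z}^{\ast}\colon L^{2}(\mathbb{R}^{3})^{4}\to H^{1/2}(\Gamma)^{4}$ is compact into $H^{-1/2}(\Gamma)^{4}$, so $R_{z}^{\eta,\tau}-R_{z}^{0}$ is compact and Weyl's theorem gives (i). For (ii) I would use the Birman--Schwinger correspondence built into the quasi boundary triple formalism: if $(I_{4}+(\eta I_{4}+\tau\beta)M_{z})\varphi=0$ with $\varphi\in H^{-1/2}(\Gamma)^{4}\setminus\{0\}$, then $f:=G_{z}\varphi\in\ker(S^{\ast}-z)$ satisfies $\Gamma_{0}f=(\eta I_{4}+\tau\beta)\Gamma_{1}f$, hence $f\in\ker(A_{\eta,\tau}-z)$, and the converse is analogous. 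For (iv), the map $z\mapsto I_{4}+(\eta I_{4}+\tau\beta)M_{z}$ is holomorphic on $\mathbb{C}\setminus((-\infty,-1]\cup[1,\infty))$ by Proposition~\ref{Proposition_m_Z_map}(iii) and, by the Fredholm analysis above, is a finite-meromorphic family of Fredholm operators of index zero; at $z=i$ it is invertible, so analytic Fredholm theory forces its non-invertibility set, and thus $\sigma_{\textup{disc}}(A_{\eta,\tau})\cap(-1,1)$, to be discrete in $(-1,1)$; combined with (i) and the absence of accumulation at $\pm 1$ (which follows once one notes that $M_{z}^{\pm}$ stay bounded as $z\to\pm 1$ from $\mathbb{C}_{\pm}$ in the relevant topology from Proposition~\ref{Proposition_m_Z_map}(iv)), finiteness follows.

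The hardest technical point is the invertibility argument in the non-critical regime; the delicate issue is that neither $M_{z}$ nor $(\eta I_{4}+\tau\beta)M_{z}$ is a priori a small or compact perturbation of a scalar, so the cancellation $M_{z}^{2}\equiv\tfrac{1}{4}I_{4}$ modulo smoothing (and $\beta M_{z}+M_{z}\beta\equiv 0$ modulo smoothing) from Proposition~\ref{Proposition_m_Z_map}(v)--(vi) is essential and must be used in exactly the right algebraic combination dictated by the coefficient $\eta^{2}-\tau^{2}-4$.
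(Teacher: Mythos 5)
The paper does not prove this proposition itself; it cites \cite[Lemma~3.3, Theorem~3.4, Theorem~4.1]{BEHL19} and \cite[Theorem~4.4]{BEHL17}, and your overall scheme (quasi boundary triple, identification of $G_z$, $M_z$ with the $\gamma$-field and Weyl function, the factorization $(I_4\mp BM_z)(I_4\pm BM_z)=(1-\tfrac{\eta^2-\tau^2}{4})I_4+\text{compact}$ with $B=\eta I_4+\tau\beta$) is indeed the route taken there and reappears in the paper's own proof of Proposition~\ref{proposition_LAP_Birman_Schwinger}~$(iii)$. However, the step you single out as central contains a genuine error: it is \emph{false} that $M_{it}$ becomes small in norm as $|t|\to\infty$. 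The strongly singular part of the kernel of $R_z^0$ (the Cauchy-type term $i\alpha\cdot(x-y)/(4\pi|x-y|^3)$) does not decay in $z$, and $M_z$ remains an operator of order zero with $\|M_z\|_{\mathsf{B}(H^{-1/2}(\Gamma)^4)}$ bounded away from $0$ uniformly in $z$. Your own identity shows this: if $\|BM_z\|<1$ for some $z$, then both factors, hence the product $(1-\tfrac{\eta^2-\tau^2}{4})I_4+K_z$, would be invertible for \emph{every} admissible $B$ with that norm bound; taking $\eta^2-\tau^2=4$ the product is the compact operator $K_z$, which is never invertible on the infinite-dimensional space $H^{-1/2}(\Gamma)^4$. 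This is precisely why the critical case exists, so no Neumann-series argument at large imaginary $z$ can work. The correct completion is: your factorization gives that $I_4+BM_z$ is Fredholm of index $0$ when $\eta^2-\tau^2\neq 4$; injectivity for $z\in\mathbb{C}\setminus\mathbb{R}$ then follows from the symmetry of $A_{\eta,\tau}$ (established directly via Green's identity on $\Omega_\pm$), because a nonzero $\varphi\in\ker(I_4+BM_z)$ produces $0\neq G_z\varphi\in\ker(S^*-z)$ satisfying the boundary condition, i.e.\ a non-real eigenvalue of a symmetric operator. Bijectivity of $I_4+BM_z$ for all non-real $z$ then yields the Kre\u{\i}n formula and, since its right-hand side is everywhere defined and bounded, self-adjointness.

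A secondary weak point concerns $(iv)$: analytic Fredholm theory gives discreteness of $\sigma_{\rm disc}(A_{\eta,\tau})\cap(-1,1)$, but your argument for non-accumulation at $\pm1$ appeals to boundedness of $M_z^\pm$ ``as $z\to\pm1$ from Proposition~\ref{Proposition_m_Z_map}~$(iv)$'', which that proposition explicitly does not provide (its continuity statement is on $\overline{\mathbb{C}_\pm}\setminus\{-1,1\}$). Ruling out accumulation of eigenvalues at the thresholds requires a separate argument (this is the content of \cite[Theorem~4.1]{BEHL19}), and even uniform boundedness of $M_z$ near $\pm1$ would not by itself exclude accumulation of the non-invertibility set of $I_4+BM_z$ there. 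The remaining parts of your outline — compactness of $G_z\Lambda_z^{\eta,\tau}G_{\bar z}^*$ giving $(i)$ via Weyl's theorem, and the Birman--Schwinger correspondence for $(ii)$ — are sound and match the cited references.
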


\begin{remark} \label{remark_regularity}
  One can show that a generic function in $\dom   S^*$ does not possess any positive Sobolev regularity near $\Gamma$. 
  However, in the non-critical case $\eta^{2}-\tau^{2}\neq4$ it was shown 
  in \cite[Theorem~3.4]{BEHL19} that $\dom   A_{\eta, \tau} \subset H^1(\Omega_+)^4 \oplus H^1(\Omega_-)^4$.
\end{remark}

In the following proposition we discuss the existence of embedded eigenvalues.

\begin{proposition}
\label{proposition_embedded_eigenvalues} Let $\eta,\tau \in\mathbb{R}$ such that $\eta^{2}-\tau^{2}\neq4$, let $A_{\eta,\tau}$ be 
defined by \eqref{def_delta_op}, and assume that $\Omega_+$ is connected.

\begin{itemize}
\item[$(i)$] If $\eta^{2}-\tau^{2}\neq\pm4$, then
$A_{\eta,\tau}$ has no embedded eigenvalues in
$(-\infty,-1)\cup(1,\infty)$.

\item[$(ii)$] If $\eta^{2}-\tau^{2}=-4$, then
$A_{\eta,\tau}$ has a discrete set of embedded
eigenvalues in $(-\infty,-1)\cup(1,\infty)$ which may only accumulate at
$\pm\infty$.
\end{itemize}
\end{proposition}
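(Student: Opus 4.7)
\emph{Plan.} Suppose $\lambda\in(-\infty,-1)\cup(1,\infty)$ is an embedded eigenvalue of $A_{\eta,\tau}$ with nontrivial eigenfunction $f=f_+\oplus f_-\in\dom A_{\eta,\tau}$. Since $\eta^2-\tau^2\ne 4$, Remark~\ref{remark_regularity} yields $f_\pm\in H^1(\Omega_\pm)^4$. On $\Omega_+$, the identity $(A_0-\lambda)(A_0+\lambda)=(-\Delta+1-\lambda^2)I_4$ together with interior elliptic regularity shows that each component of $f_+$ is a smooth $L^2$-solution of the Helmholtz equation on the unbounded connected domain $\Omega_+$ with wavenumber $k^2=\lambda^2-1>0$; Rellich's uniqueness theorem at infinity together with unique continuation then forces $f_+\equiv 0$, so $\gamma_0^+ f=0$. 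Inserting this into the jump condition defining $\dom A_{\eta,\tau}$ produces the pointwise algebraic condition $(\eta I_4+\tau\beta-2i\alpha\cdot\nu)\gamma_0^- f=0$ on $\Gamma$, and the anticommutations $\{\beta,\alpha\cdot\nu\}=0$, $(\alpha\cdot\nu)^2=I_4$ yield the key factorization
\begin{equation*}
(\eta I_4-\tau\beta+2i\alpha\cdot\nu)(\eta I_4+\tau\beta-2i\alpha\cdot\nu)=(\eta^2-\tau^2+4)\,I_4.
\end{equation*}

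For part $(i)$, the hypothesis $\eta^2-\tau^2\ne\pm 4$ makes $\eta^2-\tau^2+4\ne 0$, hence the boundary matrix is invertible and $\gamma_0^- f=0$. Extending $f_-$ by zero across $\Gamma$ then produces $F\in H^1(\mathbb{R}^3)^4=\dom A_0$ with $(A_0-\lambda)F=0$, which contradicts $\sigma_{\textup p}(A_0)=\emptyset$ unless $f=0$. This proves the absence of embedded eigenvalues.

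For part $(ii)$, when $\eta^2-\tau^2=-4$ the right-hand side of the factorization vanishes, and a direct block computation in $\mathbb{C}^4=\mathbb{C}^2\oplus\mathbb{C}^2$ shows that $\ker(\eta I_4+\tau\beta-2i\alpha\cdot\nu(x))$ is exactly two-dimensional at every $x\in\Gamma$, with smooth dependence on $x$. Writing $Q(x)$ for the orthogonal projection onto the complement of this kernel, the map $f\mapsto f_-$ (with inverse given by zero extension to $\Omega_+$) identifies the embedded eigenvalues of $A_{\eta,\tau}$ in $(-\infty,-1)\cup(1,\infty)$ with the eigenvalues in that set of the Dirac realization
\begin{equation*}
A_\Omega^Q g:=(-i\alpha\cdot\nabla+\beta)g,\qquad \dom A_\Omega^Q:=\{g\in H^1(\Omega_-)^4:Q\gamma_0 g=0\},
\end{equation*}
on the bounded domain $\Omega_-$. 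The claim is then reduced to showing that $A_\Omega^Q$ has compact resolvent, since its spectrum would be discrete with only $\pm\infty$ as possible accumulation points.

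The main obstacle is this last step: one must verify that the local projection condition $Q\gamma_0 g=0$ is a regular elliptic (Shapiro–Lopatinski) boundary condition for the first-order operator $-i\alpha\cdot\nabla+\beta$. This can be carried out by inspecting the principal symbol and observing that $Q$ is of MIT-bag type up to a smooth pointwise unitary conjugation on $\Gamma$; once ellipticity is established, elliptic regularity combined with the compact embedding $H^1(\Omega_-)\hookrightarrow L^2(\Omega_-)$ gives compactness of the resolvent. An alternative route bypassing the construction of $A_\Omega^Q$ is to apply the analytic Fredholm theorem directly to the holomorphic family $z\mapsto I+(\eta I_4+\tau\beta)M_z$ on $H^{-1/2}(\Gamma)^4$: using Proposition~\ref{Proposition_m_Z_map}~$(v)$ and $(vi)$ one obtains $((\eta I_4+\tau\beta)M_z)^2=\tfrac{\eta^2-\tau^2}{4}I+K(z)$ with $K(z)$ compact, so that for $\eta^2-\tau^2=-4$ the scalar $1-\tfrac{\eta^2-\tau^2}{4}=2\ne 0$ yields a two-sided compact parametrix and Fredholmness of index zero, and the boundary continuity of $M_\lambda^\pm$ from Proposition~\ref{Proposition_m_Z_map}~$(iv)$ permits passage to the real axis.
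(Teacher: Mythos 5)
Your part $(i)$ is correct and is essentially the argument the paper delegates to \cite[Theorem~3.7]{AMV15}: $H^1$-regularity from Remark~\ref{remark_regularity}, Rellich's theorem plus unique continuation on the connected exterior domain to get $f_+\equiv 0$, and then the algebraic factorization $(\eta I_4-\tau\beta+2i\alpha\cdot\nu)(\eta I_4+\tau\beta-2i\alpha\cdot\nu)=(\eta^2-\tau^2+4)I_4$ (which you compute correctly) to force $\gamma_0^-f=0$ and conclude via $\sigma_{\mathrm p}(A_0)=\varnothing$. Your reduction in part $(ii)$ to an interior Dirac operator on $\Omega_-$ with a projection boundary condition is also the paper's route (there phrased as the confinement decoupling $A_{\eta,\tau}=B_{\eta,\tau}(\Omega_+)\oplus B_{\eta,\tau}(\Omega_-)$ from \cite[Lemma~3.1]{BEHL19}).

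The genuine gap is the step you yourself flag as ``the main obstacle'': compactness of the resolvent of the interior realization. You propose to verify a Shapiro--Lopatinski condition, but do not carry this out; and your fallback via the analytic Fredholm theorem does not work, for two reasons. First, the Birman--Schwinger correspondence of Proposition~\ref{proposition_basic_delta_op}~$(ii)$ is only established for the discrete spectrum, so non-invertibility of $I_4+(\eta I_4+\tau\beta)M_\lambda^{\pm}$ has not been identified with embedded eigenvalues. Second, on the real axis $\lambda\mapsto M_\lambda^{\pm}$ is only continuous, not holomorphic, so the Fredholm argument yields at best a closed null set of ``bad'' $\lambda$ (this is exactly the set $\mathcal N_\infty$ of Proposition~\ref{proposition_LAP_Birman_Schwinger}~$(iv)$, whose discreteness the paper explicitly does not claim), not a discrete one. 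The irony is that you already hold the tool that closes the gap in one line: Remark~\ref{remark_regularity} (i.e.\ \cite[Theorem~3.4]{BEHL19}), which you invoked at the outset, gives $\dom A_{\eta,\tau}\subset H^1(\Omega_+)^4\oplus H^1(\Omega_-)^4$ for $\eta^2-\tau^2=-4\neq 4$; combined with the decoupling this shows that the domain of the interior operator lies in $H^1(\Omega_-)^4$, and the compact embedding $H^1(\Omega_-)\hookrightarrow L^2(\Omega_-)$ immediately yields compact resolvent, hence a discrete set of eigenvalues accumulating only at $\pm\infty$. No symbol calculus or ellipticity verification is needed.
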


\begin{proof}
  Assertion $(i)$ can be shown in the same way as \cite[Theorem~3.7]{AMV15}; cf. the 
  discussion after this result. To get the result from item~$(ii)$ we note first that for $\eta^{2}-\tau^{2}=-4$ one has the decoupling 
  \begin{equation*}
    A_{\eta, \tau} = B_{\eta, \tau}(\Omega_+) \oplus B_{\eta, \tau}(\Omega_-),
  \end{equation*}
  where $B_{\eta, \tau}(\Omega_\pm)$ is a self-adjoint Dirac operator acting in $L^2(\Omega_\pm)^4$ with suitable boundary conditions on 
  $\Gamma$; cf. \cite[Lemma~3.1]{BEHL19}. One can show in the same way as in \cite[Theorem~3.7]{AMV15} that 
  $B_{\eta, \tau}(\Omega_+)$ has no eigenvalues in $(-\infty,-1)\cup(1,\infty)$. On the other hand, according to Remark~\ref{remark_regularity} the domain 
  of definition of $B_{\eta, \tau}(\Omega_-)$ is contained in $H^1(\Omega_-)^4$, which implies that
  the resolvent of $B_{\eta, \tau}(\Omega_-)$ is compact. Hence $B_{\eta, \tau}(\Omega_-)$ and thus also $A_{\eta, \tau}$ 
  have a discrete set of eigenvalues in $(-\infty,-1)\cup(1,\infty)$ possibly accumulating at $\pm \infty$.
\end{proof}

The map $\Lambda_z^{\eta, \tau}$ appearing in the Kre\u{\i}n type resolvent formula in Proposition~\ref{proposition_basic_delta_op} will be important for our later analysis.
In the following proposition we discuss some basic properties of $\Lambda_z^{\eta, \tau}$; in particular, we extend the limiting absorption principle for $M_z$ from Proposition~\ref{Proposition_m_Z_map} to $\Lambda_z^{\eta, \tau}$. This will be a key ingredient to show the limiting absorption principle for $A_{\eta, \tau}$ in Theorem~\ref{theorem_LAP}.

\begin{proposition} \label{proposition_LAP_Birman_Schwinger}
Let $\eta,\tau\in\mathbb{R}$ such that
$\eta^{2}-\tau^{2}\neq4$ and let $\Lambda_z^{\eta, \tau}$, $z \in \mathsf{res}(A_{\eta,\tau})$, be defined by~\eqref{def_Lambda}. Then the following assertions are true.

\begin{itemize}
\item[$(i)$] For $z_1, z_2 \in \mathsf{res}(A_{\eta,\tau})$ the relation
\begin{equation*}
  \Lambda_{z_1}^{\eta, \tau} - \Lambda_{z_2}^{\eta, \tau} = (z_2-z_1) \Lambda_{z_1}^{\eta, \tau} G_{\Bar z_1}^* G_{z_2} \Lambda_{z_2}^{\eta, \tau} 
\end{equation*}
holds.
\item[$(ii)$] Viewing $\Lambda_z^{\eta, \tau}$ as an operator in $\B(H^{1/2}(\Gamma)^4, H^{-1/2}(\Gamma)^4)$, one has $(\Lambda_z^{\eta, \tau})^* = \Lambda_{\Bar z}^{\eta, \tau}$.
\item[$(iii)$] The map
$z\mapsto \Lambda^{\eta, \tau}_z \in\mathsf{B}(H^{-1/2}(\Gamma)^4)$
is holomorphic 
on $\mathsf{res}\left(  A_{0}\right)  \setminus \sigma_\textup{disc}(A_{\eta, \tau})$.

\item[$(iv)$] There exists a closed set $\mathcal{N}_{\infty}\subset\left(
-\infty,-1\right)  \cup\left(  1,\infty\right)  $ with Lebesgue measure zero
such that the limits
\begin{equation}\label{jadoch}
  \Lambda_\lambda^{\eta, \tau, \pm} := \lim_{\varepsilon \searrow 0} \Lambda_{\lambda \pm i \varepsilon}^{\eta, \tau}
\end{equation}
exist in $\mathsf{B}(  H^{-1/2}(\Gamma)^4)  $ for all
$\lambda\in\left(  \left(  -\infty,-1\right)  \cup\left(  1,\infty\right)
\right)  \setminus\mathcal{N}_{\infty}$ and the maps
\begin{equation*}
  z \mapsto \Lambda_z^{\eta, \tau, \pm} := \begin{cases} \Lambda_z^{\eta, \tau}, & z \in \mathbb{C} 
  \setminus \bigl( (-\infty, -1] \cup  [1, \infty)\cup \sigma_\text{\rm disc}(A_{\eta, \tau})\bigr), \\ \Lambda_\lambda^{\eta, \tau, \pm}, & z=\lambda \in ((-\infty, -1) \cup (1, \infty) ) \setminus \mathcal{N}_\infty, \end{cases}
\end{equation*}
are continuous from $\overline{\mathbb{C}_\pm} \setminus (\sigma_\text{\rm disc}(A_{\eta, \tau}) \cup \mathcal{N}_\infty)$ to  $\mathsf{B}(H^{-1/2}(\Gamma)^4)$.
\end{itemize}
\end{proposition}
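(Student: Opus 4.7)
The plan is to treat items~(i)--(iii) by short direct calculations from the definition $\Lambda_z^{\eta,\tau}=(I_4+VM_z)^{-1}V$ with $V:=\eta I_4+\tau\beta$, and to reserve the main effort for~(iv), which rests on a Fredholm factorisation of $I_4+VM_z$. For~(i), apply the elementary identity $A^{-1}-B^{-1}=A^{-1}(B-A)B^{-1}$ with $A=I_4+VM_{z_1}$ and $B=I_4+VM_{z_2}$; since $B-A=V(M_{z_2}-M_{z_1})$, one obtains $\Lambda_{z_1}^{\eta,\tau}-\Lambda_{z_2}^{\eta,\tau}=\Lambda_{z_1}^{\eta,\tau}(M_{z_2}-M_{z_1})\Lambda_{z_2}^{\eta,\tau}$, and substituting the $M_z$-identity of Proposition~\ref{Proposition_m_Z_map}(ii) gives the claim. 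For~(ii), use $V^*=V$ and $M_z^*=M_{\bar z}$ (immediate from $M_z=\gamma_0R_z^0\gamma_0^*$): dualising the defining formula yields $(\Lambda_z^{\eta,\tau})^*=V(I_4+M_{\bar z}V)^{-1}$, which equals $(I_4+VM_{\bar z})^{-1}V=\Lambda_{\bar z}^{\eta,\tau}$ via the intertwining $(I_4+VM_{\bar z})V=V(I_4+M_{\bar z}V)$. For~(iii), Proposition~\ref{proposition_basic_delta_op}(i) gives $\mathsf{res}(A_0)\setminus\sigma_\textup{disc}(A_{\eta,\tau})\subset\mathsf{res}(A_{\eta,\tau})$, on which $I_4+VM_z$ is boundedly invertible by Proposition~\ref{proposition_basic_delta_op}(iii); together with the holomorphy of $M_z$ (Proposition~\ref{Proposition_m_Z_map}(iii)) and the continuity of operator inversion this yields holomorphy of $\Lambda_z^{\eta,\tau}$.

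For~(iv) the central algebraic identity is $(I_4+VM_z)(I_4-VM_z)=I_4-(VM_z)^2$. Expanding $(VM_z)^2$ with $\beta^2=I_4$ and the substitution $M_z\beta=-\beta M_z+(\beta M_z+M_z\beta)$ yields
\begin{equation*}
  (VM_z)^2=(\eta^2-\tau^2)\,M_z^2+(\eta\tau+\tau^2\beta)(\beta M_z+M_z\beta)\,M_z.
\end{equation*}
By Proposition~\ref{Proposition_m_Z_map}(v)--(vi) both $M_z^2-\tfrac14 I_4$ and $\beta M_z+M_z\beta$ map $H^{-1/2}(\Gamma)^4$ into $H^{1/2}(\Gamma)^4$, hence are compact on $H^{-1/2}(\Gamma)^4$ via Rellich; one obtains $(VM_z)^2=\tfrac{\eta^2-\tau^2}{4}\,I_4+K(z)$ with $K(z)\in\B(H^{-1/2}(\Gamma)^4)$ compact and, by Proposition~\ref{Proposition_m_Z_map}(iv), continuous in $z\in\overline{\mathbb{C}_\pm}\setminus\{-1,1\}$. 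With $c:=1-\tfrac{\eta^2-\tau^2}{4}\neq 0$ (non-critical case), the factorisation
\begin{equation*}
  (I_4+VM_z^\pm)(I_4-VM_z^\pm)=cI_4-K(z)
\end{equation*}
exhibits $c^{-1}(I_4-VM_z^\pm)$ as a two-sided Fredholm parametrix for $I_4+VM_z^\pm$, which is therefore Fredholm of index zero on $H^{-1/2}(\Gamma)^4$ throughout $\overline{\mathbb{C}_\pm}\setminus\{-1,1\}$.

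Define $\mathcal{N}_\infty:=\{\lambda\in(-\infty,-1)\cup(1,\infty): I_4+VM_\lambda^+\text{ or }I_4+VM_\lambda^-\text{ is not invertible}\}$; the openness of the invertible operators in $\B(H^{-1/2}(\Gamma)^4)$ and the continuity above make $\mathcal{N}_\infty$ closed. For the measure zero property, any $\phi\in\ker(I_4+VM_\lambda^\pm)$ satisfies $(cI_4-K(\lambda^\pm))\phi=(I_4-VM_\lambda^\pm)(I_4+VM_\lambda^\pm)\phi=0$, so $\mathcal{N}_\infty$ lies in the boundary zero set of a suitable regularised Fredholm determinant $d^\pm(z):=\det_k(I_4-c^{-1}K(z))$ chosen compatibly with the Schatten class of $K(z)$ (finite because $K(z)$ factors through the Rellich embedding on the two-dimensional surface $\Gamma$). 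In $\mathbb{C}_\pm$ both $I_4+VM_z$ and $I_4-VM_z$ are invertible (the latter being the analogous operator associated to $A_{-\eta,-\tau}$), so $d^\pm$ is a bounded holomorphic non-vanishing function there, continuous to the boundary away from $\{-1,1\}$; classical Hardy-type boundary uniqueness then forces the zero set of its boundary values to have Lebesgue measure zero. Continuity of the extension $\Lambda_z^{\eta,\tau,\pm}=(I_4+VM_z^\pm)^{-1}V$ on $\overline{\mathbb{C}_\pm}\setminus(\sigma_\textup{disc}(A_{\eta,\tau})\cup\mathcal{N}_\infty)$ is then immediate from continuity of inversion on the open set of invertible operators.

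The principal obstacle is the measure zero step: while the Fredholm factorisation reduces matters neatly to a scalar determinant, one must (a) identify the Schatten class $\mathcal{S}_k$ containing $K(z)$ using the smoothing order of $M_z^2-\tfrac14 I_4$ and $\beta M_z+M_z\beta$ on the surface $\Gamma$, and (b) verify the local $H^\infty$/Hardy behaviour of $d^\pm$ near the boundary so that the classical boundary uniqueness theorem is applicable; items~(i)--(iii) by contrast are routine consequences of the structural properties of $M_z$ and $G_z$ already available from the preliminaries.
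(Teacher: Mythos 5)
Your items (i)--(iii) are correct and essentially coincide with the paper's arguments; for (iii) your route is in fact slightly more direct, since $\mathsf{res}(A_0)\setminus\sigma_{\textup{disc}}(A_{\eta,\tau})=\mathsf{res}(A_{\eta,\tau})$ (because $\sigma_{\textup{ess}}(A_{\eta,\tau})=\mathbb{C}\setminus\mathsf{res}(A_0)$), so Proposition~\ref{proposition_basic_delta_op}~$(iii)$ plus holomorphy of $M_z$ and of operator inversion suffices, whereas the paper detours through the analytic Fredholm theorem. For (iv), your factorisation $(I_4+VM_z)(I_4-VM_z)=cI_4-K(z)$ with $K(z)$ compact is exactly the identity~\eqref{Weyl_id} used in the paper, and your algebra for $(VM_z)^2$ checks out.

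The one place where your argument is not complete is the measure-zero step. The paper closes it with a single citation: \cite[Theorem~9.10.2]{Sch} states precisely that if $z\mapsto T_z=cI_4-K(z)$ is continuous up to the real axis with values in compact perturbations of (a multiple of) the identity and $T_z$ is invertible off the axis, then $T_\lambda^\pm$ is invertible for all real $\lambda$ outside a closed Lebesgue-null set. Your regularised-determinant/Privalov sketch is in spirit a proof of that cited theorem, but as written it has a gap you only partially flag: for $d^\pm(z)=\det_k(I_4-c^{-1}K(z))$ to be holomorphic in $\mathbb{C}_\pm$ and, crucially, continuous up to the real axis (so that its boundary zero set can be identified with $\mathcal{N}_\infty$), you need continuity of $z\mapsto K(z)$ in the Schatten norm $\mathcal{S}_k$, not merely in operator norm. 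This would follow from the factorisation $K(z)=J\circ\widetilde K(z)$ with $J:H^{1/2}(\Gamma)^4\hookrightarrow H^{-1/2}(\Gamma)^4$ the fixed Schatten-class embedding, \emph{provided} the maps $M_z^2-\tfrac14 I_4$ and $\beta M_z+M_z\beta$ are continuous up to the axis as operators into $H^{1/2}(\Gamma)^4$; Proposition~\ref{Proposition_m_Z_map} records their boundedness into $H^{1/2}(\Gamma)^4$ but continuity only in $\mathsf{B}(H^{-1/2}(\Gamma)^4)$, so this needs to be supplied. Everything else in your (iv) — the definition and closedness of $\mathcal{N}_\infty$, the invertibility of $I_4-VM_z$ via $A_{-\eta,-\tau}$, and the continuity of $\Lambda_z^{\eta,\tau,\pm}$ on the good set via a Neumann-series argument — matches the paper.
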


\begin{proof}
$(i)$ We introduce the notation $B := \eta I_4 + \tau \beta$. Using the resolvent identity and Proposition~\ref{Proposition_m_Z_map}~$(ii)$ we get 
\begin{equation*}
  \begin{split}
    \Lambda_{z_1}^{\eta, \tau} &- \Lambda_{z_2}^{\eta, \tau} = (I_4 + B M_{z_1})^{-1} B - (I_4 + B M_{z_2})^{-1} B \\ 
    &= (I_4 + B M_{z_1})^{-1} B (M_{z_2} - M_{z_1}) (I_4 + B M_{z_2})^{-1} B 
    = (z_2 - z_1) \Lambda_{z_1}^{\eta, \tau} G_{\Bar z_1}^* G_{z_2} \Lambda_{z_2}^{\eta, \tau},
  \end{split}
\end{equation*}
which is the claimed result.

$(ii)$ follows from the fact that $\widehat{M}_z := M_z^* \in 
\B(H^{1/2}(\Gamma)^4)$ is given by the restriction $\widehat{M}_z = M_{\Bar z} 
\upharpoonright H^{1/2}(\Gamma)^4$; cf. 
\cite[Proposition~4.4~(ii)]{BeHo} and also 
Remark~\ref{remark_gamma_tilde}. In fact, together with 
Proposition~\ref{proposition_basic_delta_op}~($iii$) this implies that 
$I_4+(\eta I_{4}+\tau\beta) \widehat{M}_{z}$ is boundedly invertible in 
$H^{1/2}(\Gamma)^4$ and hence, we have for $\varphi, \psi \in 
H^{1/2}(\Gamma)^4$ that
\begin{equation*}
   \begin{split}
     &\big( \Lambda_z^{\eta, \tau} \varphi, \psi \big)_{H^{-1/2}(\Gamma)^4\times H^{1/2}(\Gamma)^4}\\
     & \qquad  = \big( ( I_4+(\eta I_{4}+\tau\beta)M_{z})^{-1} (\eta 
I_{4}+\tau\beta) \varphi, \psi \big)_{H^{-1/2}(\Gamma)^4\times H^{1/2}(\Gamma)^4} \\
       &\qquad =  \big( ( I_4+(\eta I_{4}+\tau\beta)\widehat{M}_{\Bar z})^{-1} (\eta 
I_{4}+\tau\beta) \varphi, \psi \big)_{H^{1/2}(\Gamma)^4\times H^{-1/2}(\Gamma)^4} \\
       &\qquad =  \big( \varphi, (\eta I_{4}+\tau\beta) \left( I_4+M_{\Bar z} 
(\eta I_{4}+\tau\beta)\right)^{-1}  \psi \big)_{H^{1/2}(\Gamma)^4\times H^{-1/2}(\Gamma)^4} \\
       &\qquad =  \big( \varphi,  ( I_4+(\eta I_{4}+\tau\beta)M_{\Bar z})^{-1} 
(\eta I_{4}+\tau\beta) \psi \big)_{H^{1/2}(\Gamma)^4\times H^{-1/2}(\Gamma)^4}\\
       &\qquad =  \big( \varphi,  \Lambda_{\Bar z}^{\eta, \tau} \psi \big)_{H^{1/2}(\Gamma)^4\times H^{-1/2}(\Gamma)^4},
   \end{split}
\end{equation*}
which is the claim of this item.

$(iii)$ First we show that the identity
\begin{equation}
\left(  I_{4}+(\eta I_{4}+\tau\beta)M_{z}\right)
\left(  I_{4}-(\eta I_{4}+\tau\beta)M_{z}\right)
=I_{4}-\frac{1}{4}(\eta^{2}-\tau^{2})I_{4}-K_{z}
\label{Weyl_id}%
\end{equation}
holds with a compact operator $K_{z}$ in $H^{-1/2}(\Gamma)^4$. To prove
\eqref{Weyl_id} we note that
\begin{equation*}
I_{4}-(\eta I_{4}+\tau\beta)M_{z}=I_{4}-M_{z}%
(\eta I_{4}-\tau\beta)+K_{1,z}%
\end{equation*}
with
\begin{equation*}
K_{1,z}:=M_{z}(\eta I_{4}-\tau\beta)-(\eta I_{4}+\tau\beta)M_{z}=-\tau(M_{z}\beta+\beta M_{z}).
\end{equation*}
Since $H^{1/2}(\Gamma)^4$ is compactly embedded in $H^{-1/2}(\Gamma)^4$ it follows from 
Proposition~\ref{Proposition_m_Z_map}~$(vi)$ that $K_{1,z}$ is a compact operator in
$H^{-1/2}(\Gamma)^4$. Hence, also
\begin{equation*}
K_{2,z}:=(I_{4}+(\eta I_{4}+\tau\beta)M_{z})K_{1,z}%
\end{equation*}
is compact in $H^{-1/2}(\Gamma)^4$. Thus, we have
\begin{equation*}
\begin{split}
 \left(  I_{4}+(\eta I_{4}+\tau\beta)M_{z}\right)&
\left(  I_{4}-(\eta I_{4}+\tau\beta)M_{z}\right) \\
  &=\left(  I_{4}+(\eta I_{4}+\tau\beta)M_{z}\right)
\left(  I_{4}-M_{z}(\eta I_{4}-\tau\beta)\right)
+K_{2,z}\\
&  =I_{4}-(\eta I_{4}+\tau\beta)(M_{z})^{2}%
(\eta I_{4}-\tau\beta)+\tau(\beta
M_{z}+M_{z}\beta)+K_{2,z}\\
&  =I_{4}-\frac{1}{4}(\eta^{2}-\tau^{2})I_{4}-K_{z}%
\end{split}
\end{equation*}
with
\begin{equation*}
K_{z}:=(\eta I_{4}+\tau\beta)\left(  (M_{z}
)^{2}-\frac{1}{4}\right)  (\eta I_{4}-\tau\beta
)-\tau(\beta M_{z}+M_{z}\beta)-K_{2,z},
\end{equation*}
which is compact in $H^{-1/2}(\Gamma)^4$ by Proposition~\ref{Proposition_m_Z_map}~$(v)$-$(vi)$.
This shows~\eqref{Weyl_id}.

By Proposition~\ref{proposition_basic_delta_op}~$(iii)$ and~\eqref{Weyl_id} we have for $z\in \mathsf{res}(A_{\eta,\tau})\cap \mathsf{res}(A_{-\eta,-\tau})$ that
\begin{equation*}
\left(  I_{4}-(\eta I_{4}+\tau\beta)M_{z}\right)
^{-1}\left(  I_{4}+(\eta I_{4}+\tau\beta
)M_{z}\right)  ^{-1}=\left(  I_{4}-\frac{1}{4}(\eta^{2}
-\tau^{2})I_{4}-K_{z}\right)  ^{-1} 
\end{equation*}
belongs to $\mathsf{B}(H^{-1/2}(\Gamma)^4)$.
Moreover, the map $I_{4}-1/ 4(\eta^{2}-\tau^{2})I_{4}-K_{z}$ is holomorphic 
in $\mathsf{res}(A_0)$ 
due to the holomorphy
 of $M_z$ shown in Proposition~\ref{Proposition_m_Z_map}~$(iii)$ and $K_{z}$ is compact in $H^{-1/2}(\Gamma)^4$.
Therefore, the analytic 
Fredholm theorem \cite[Theorem VI.14]{ReSi} implies that
\begin{equation*}
  z \mapsto \left(I_{4}-\frac{1}{4}(\eta^{2}-\tau^{2})I_{4}-K_{z}\right)^{-1}
\end{equation*}
is holomorphic 
in $\B(H^{-1/2}(\Gamma)^4)$ for 
$z \in \mathsf{res}\left(  A_{0}\right) \setminus (\sigma_\text{disc}(A_{\eta, \tau}) \cup \sigma_\text{disc}(A_{-\eta, -\tau}) \cup \mathcal{N}_0)$, 
where $\mathcal{N}_0$ is a discrete set in $ \mathsf{res}(A_0)$. 
Since $z\mapsto M_{z}\in\mathsf{B}(H^{-1/2}(\Gamma)^4)$ is holomorphic 
on
$\mathsf{res}\left(  A_{0}\right)  $ by Proposition~\ref{Proposition_m_Z_map}, we conclude from
\begin{equation}
\Lambda_z^{\eta, \tau}=\left(  I_{4}-(\eta I_{4}+\tau\beta
)M_{z}\right)  \left(  I_{4}-\frac{1}{4}(\eta^{2}-\tau^{2})I_{4}-K_{z}\right)^{-1} (\eta I_4 + \tau \beta) \label{Weyl_id_2}%
\end{equation}
that $z\mapsto \Lambda_z^{\eta, \tau} \in\mathsf{B}(H^{-1/2}(\Gamma)^4)$ is holomorphic 
on
$\mathsf{res}(  A_0) \setminus (\sigma_\text{disc}(A_{\eta, \tau}) \cup \sigma_\text{disc}(A_{-\eta, -\tau}) \cup \mathcal{N}_0)$. 
Finally, by Proposition~\ref{proposition_basic_delta_op} and holomorphy
 this extends to all 
$z \in  \mathsf{res}\left(  A_{0}\right) \setminus \sigma_\text{disc}(A_{\eta, \tau})$.

$(iv)$ Note first that the limit properties of $z\mapsto M_{z}$ for
$z=\lambda\pm i\varepsilon$ and $\varepsilon \searrow 0$ with $\lambda\in\left(  -\infty,-1\right)
\cup\left(  1,\infty\right)  $ extend to
\begin{equation*}
z\mapsto T_{z}:=I_{4}-\frac{1}{4}(\eta^{2}-\tau^{2})I_{4}-K_{z}.
\end{equation*}
More precisely, it follows from Proposition~\ref{Proposition_m_Z_map} and \eqref{Weyl_id}
that
\begin{equation*}
T_{\lambda}^{\pm}:=\lim_{\varepsilon \searrow 0}T_{\lambda\pm i\varepsilon}\in\mathsf{B}%
\big(H^{-1/2}(\Gamma)^4\big),\qquad\lambda\in (-\infty,-1)\cup (1,\infty).
\end{equation*}
It is also clear from the considerations above that $T_{z}$ depends
analytically on $z\in\mathsf{res}\left(  A_{0}\right)  $, that $T_{z}$ has
a bounded inverse for $z\in\mathbb{C}\setminus\mathbb{R}$, and that $T_z$ can be extended to the mappings
\begin{equation*}
  z \mapsto T^\pm_z := \begin{cases} T_z, & z \in \mathbb{C} \setminus ((-\infty, -1] \cup [1, \infty)), \\ T_\lambda^\pm, & z=\lambda \in (-\infty, -1) \cup (1, \infty),  \end{cases}
\end{equation*}
which are continuous from $\overline{\mathbb{C}_\pm} \setminus \{ -1, 1 \}$ to $\mathsf{B}
(H^{-1/2}(\Gamma)^4)$, see Proposition~\ref{Proposition_m_Z_map}~$(iv)$. Therefore,
\cite[Theorem 9.10.2]{Sch} implies that there exists a set $\mathcal{N}%
_{\infty}\subset\left(  -\infty,-1\right)  \cup\left(  1,\infty
\right)  $ with Lebesgue measure zero such that $(T_{\lambda}^{\pm})^{-1}%
\in\mathsf{B}(H^{-1/2}(\Gamma)^4)$ for $\lambda\in\left(  -\infty,-1\right)
\cup\left(  1,\infty\right)  \setminus\mathcal{N}_{\infty}$. Next, let $\lambda \in ((-\infty, -1) \cup (1,\infty)) \setminus \mathcal{N}_\infty$ be fixed. Then we have for a small $\delta$
\begin{equation} \label{equation_T_z}
  T_{\lambda + \delta}^{\pm} = T_{\lambda}^{\pm} \big(I_4 + (T_{\lambda}^{\pm})^{-1} (T_{\lambda + \delta}^{\pm} - T_{\lambda}^{\pm})\big).
\end{equation}
With the continuity of $z \mapsto T^\pm_z$ and the Neumann formula we deduce from this that the set $\sigma_\text{disc}(A_{\eta, \tau}) \cup \sigma_\text{disc}(A_{-\eta, -\tau})\cup \mathcal{N}_0 \cup \mathcal{N}_\infty$, on which $T_z^\pm$ is not invertible, is closed. With a similar consideration as in~\eqref{equation_T_z} with $\delta \in \overline{\mathbb{C}_\pm}$ we find that
\begin{equation} \label{T_pm_inv_cont}
  z \mapsto (T^\pm_z)^{-1} = \begin{cases} (T_z)^{-1}, & z \in \mathsf{res}\left(  A_{0}\right) \setminus (\sigma_\text{disc}(A_{\eta, \tau}) \cup \sigma_\text{disc}(A_{-\eta, -\tau}) \cup \mathcal{N}_0), \\ (T_\lambda^\pm)^{-1}, & z=\lambda \in ( (-\infty, -1) \cup (1, \infty) ) \setminus \mathcal{N}_\infty,  \end{cases}
\end{equation}
is continuous in $\mathsf{B}(  H^{-1/2}(\Gamma)^4)  $.

Now, it is clear from the above considerations that $\mathcal{N}_\infty$ is closed and with the help of item~($iii$), \eqref{Weyl_id_2}, \eqref{T_pm_inv_cont}, 
and Proposition~\ref{Proposition_m_Z_map}~($iv$) we find that $\Lambda_z^{\eta, \tau, \pm}$ in~\eqref{jadoch} is continuous from 
$\overline{\mathbb{C}_\pm} \setminus (\sigma_\text{disc}(A_{\eta, \tau}) \cup \mathcal{N}_\infty)$ to $\mathsf{B}(  H^{-1/2}(\Gamma)^4)  $. 
This finishes the proof of this proposition.
\end{proof}


\begin{remark} \label{remark_singular_perturbation}
  By Proposition~\ref{proposition_LAP_Birman_Schwinger}~$(i)$-$(ii)$ the map $\Lambda_z^{\eta, \tau}$ defined in~\eqref{def_Lambda} satisfies the relations~\textup{(2.6)} and~\textup{(2.7)} in \cite{MaPo} and so, $A_{\eta, \tau}$ fits into the framework of \cite[Section~2]{MaPo}. In particular,~\eqref{krein_resolvent_formula} corresponds to formula~\textup{(2.10)} in \cite{MaPo}; note that the resolvents in \cite{MaPo} have a different sign than in this paper.
\end{remark}

Combining the Kre\u{\i}n type resolvent formula from Proposition~\ref{proposition_basic_delta_op} with Proposition~\ref{Proposition_G_Z_map} and 
Proposition~\ref{proposition_LAP_Birman_Schwinger} we get the limiting absorption principle for $A_{\eta, \tau}$.

\begin{theorem} \label{theorem_LAP}
  Let $\eta,\tau \in\mathbb{R}$ such that $\eta^{2}-\tau^{2} \neq 4$ and let $A_{\eta,\tau}$ be defined by
\eqref{def_delta_op}. Then there exists a closed set $\mathcal{N} \subset \mathbb{R} \setminus \{ -1, 1 \}$ with Lebesgue measure zero
such that for all $\lambda\in \mathbb{R} \setminus (\mathcal{N} \cup \{-1, 1 \})$ and $w>\frac{1}{2}$ the limits 
\begin{equation*}
  R^{\eta ,\tau  ,\pm}_{\lambda} := \lim_{\varepsilon\searrow 0} \big(A_{\eta, \tau} -(\lambda \pm i\varepsilon) \big)^{-1}
\end{equation*}
exist in the topology of $\B(L^{2}_{w}(\RE^{3})^4,L^{2}_{-w}(\RE^{3})^4)$, and they are explicitly given by
\begin{equation*}
  R^{\eta ,\tau  ,\pm}_{\lambda} = R^{0,\pm}_{\lambda} -
G^{\pm}_{\lambda} \Lambda_\lambda^{\eta, \tau, \pm} (G^{\mp}_{\lambda})^*,
\end{equation*}
where $G^{\pm}_{\lambda}$, $(G^{\mp}_{\lambda})^*$,  and $\Lambda_\lambda^{\eta, \tau,\pm}$ are defined as in~\eqref{limgz123}, \eqref{lim_G_star}, 
and~\eqref{jadoch}, respectively.
\end{theorem}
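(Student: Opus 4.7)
The plan is to pass to the limit $\varepsilon\searrow 0$ in Kre\u{\i}n's resolvent formula~\eqref{krein_resolvent_formula} of Proposition~\ref{proposition_basic_delta_op}~$(iii)$ factor by factor, exploiting the limiting absorption principles already established separately for $R_z^0$, $G_z$, $G_{\bar z}^*$, and $\Lambda_z^{\eta,\tau}$.

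First I would fix the exceptional set as
$$\mathcal{N} := \mathcal{N}_\infty \cup \sigma_\textup{disc}(A_{\eta,\tau}),$$
where $\mathcal{N}_\infty$ is the closed Lebesgue null set furnished by Proposition~\ref{proposition_LAP_Birman_Schwinger}~$(iv)$. Since $\sigma_\textup{disc}(A_{\eta,\tau})$ is contained in $(-1,1)$ and is finite by Proposition~\ref{proposition_basic_delta_op}~$(iv)$, $\mathcal{N}$ is a closed subset of $\mathbb{R}\setminus\{-1,1\}$ of Lebesgue measure zero. Fix $\lambda\in\mathbb{R}\setminus(\mathcal{N}\cup\{-1,1\})$ and $w>\tfrac{1}{2}$. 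For each $\varepsilon>0$ the points $\lambda\pm i\varepsilon$ lie in $\mathsf{res}(A_{\eta,\tau})$, so Proposition~\ref{proposition_basic_delta_op}~$(iii)$ yields
$$R_{\lambda\pm i\varepsilon}^{\eta,\tau}=R_{\lambda\pm i\varepsilon}^{0}-G_{\lambda\pm i\varepsilon}\,\Lambda_{\lambda\pm i\varepsilon}^{\eta,\tau}\,G_{\lambda\mp i\varepsilon}^{*}.$$

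Next I would pass to the limit in each factor. For the free resolvent, Proposition~\ref{Proposition_Res_D}~$(ii)$ applied with $s=0$, composed with the continuous embedding $H^{1}_{-w}(\mathbb{R}^{3})^{4}\hookrightarrow L^{2}_{-w}(\mathbb{R}^{3})^{4}$, gives $R^{0}_{\lambda\pm i\varepsilon}\to R^{0,\pm}_{\lambda}$ in $\B(L^{2}_{w}(\mathbb{R}^{3})^{4},L^{2}_{-w}(\mathbb{R}^{3})^{4})$. For the three factors of the correction term, Propositions~\ref{Proposition_G_Z_map}~$(iii)$,~\ref{proposition_LAP_Birman_Schwinger}~$(iv)$ and~\ref{Proposition_G_Z_map}~$(vi)$ give, respectively,
\begin{align*}
G_{\lambda\pm i\varepsilon}&\to G_{\lambda}^{\pm} &&\text{in } \B\bigl(H^{-1/2}(\Gamma)^{4},L^{2}_{-w}(\mathbb{R}^{3})^{4}\bigr),\\
\Lambda_{\lambda\pm i\varepsilon}^{\eta,\tau}&\to \Lambda_{\lambda}^{\eta,\tau,\pm} &&\text{in } \B\bigl(H^{-1/2}(\Gamma)^{4}\bigr),\\
G_{\lambda\mp i\varepsilon}^{*}&\to (G_{\lambda}^{\mp})^{*} &&\text{in } \B\bigl(L^{2}_{w}(\mathbb{R}^{3})^{4},H^{1/2}(\Gamma)^{4}\bigr).
\end{align*}
Using the continuous embedding $H^{1/2}(\Gamma)^{4}\hookrightarrow H^{-1/2}(\Gamma)^{4}$, the composition
$$G_{\lambda}^{\pm}\,\Lambda_{\lambda}^{\eta,\tau,\pm}\,(G_{\lambda}^{\mp})^{*}\,:\, L^{2}_{w}(\mathbb{R}^{3})^{4}\to H^{1/2}(\Gamma)^{4}\hookrightarrow H^{-1/2}(\Gamma)^{4}\to L^{2}_{-w}(\mathbb{R}^{3})^{4}$$
is well defined and bounded, and norm convergence of each of the three factors together with uniform boundedness (which follows immediately from convergence in norm) yields norm convergence of the product through the standard three-term splitting $\|A_{\varepsilon}B_{\varepsilon}-AB\|\leq\|A_{\varepsilon}\|\,\|B_{\varepsilon}-B\|+\|A_{\varepsilon}-A\|\,\|B\|$ applied twice.

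Because the analytic content has already been absorbed into Propositions~\ref{Proposition_Res_D},~\ref{Proposition_G_Z_map}, and~\ref{proposition_LAP_Birman_Schwinger}, the only real task is the bookkeeping of topologies: each factor has a limit in a different space, and one must verify that the Sobolev embedding $H^{1/2}(\Gamma)^{4}\hookrightarrow H^{-1/2}(\Gamma)^{4}$ chains them into the single topology $\B(L^{2}_{w}(\mathbb{R}^{3})^{4},L^{2}_{-w}(\mathbb{R}^{3})^{4})$ appearing in the statement. I do not expect any genuine obstacle beyond this; combining the four convergences gives $R^{\eta,\tau,\pm}_{\lambda}=R^{0,\pm}_{\lambda}-G^{\pm}_{\lambda}\Lambda^{\eta,\tau,\pm}_{\lambda}(G^{\mp}_{\lambda})^{*}$ in the required operator topology.
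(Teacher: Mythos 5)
Your proposal is correct and follows essentially the same route as the paper: pass to the limit factor by factor in Kre\u{\i}n's formula~\eqref{krein_resolvent_formula}, using Propositions~\ref{Proposition_Res_D}, \ref{Proposition_G_Z_map}, and~\ref{proposition_LAP_Birman_Schwinger} together with the embedding $H^{1/2}(\Gamma)^4\hookrightarrow H^{-1/2}(\Gamma)^4$, with $\mathcal{N}=\sigma_{\textup{disc}}(A_{\eta,\tau})\cup\mathcal{N}_\infty$. The only difference is cosmetic: you spell out the three-term splitting for the norm convergence of the product, which the paper leaves implicit.
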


\begin{proof}
Recall first that $R^{0,\pm}_{\lambda}\in\mathsf{B}(L_{w}^{2}(\mathbb{R}^{3})^4, H_{-w}^{1}(\mathbb{R}^{3})^4 )$ by 
Proposition~\ref{Proposition_Res_D}~$(ii)$ for $s=0$ and $\lambda\in \mathbb{R} \setminus \{-1, 1 \}$, and hence, in particular, 
$R^{0,\pm}_{\lambda}\in\mathsf{B}(L_{w}^{2}(\mathbb{R}^{3})^4, L_{-w}^2(\mathbb{R}^{3})^4 )$ for $\lambda\in \mathbb{R} \setminus \{-1, 1 \}$. Next, we have
$(G^{\mp}_{\lambda})^*\in\mathsf{B}(L_{w}^{2}(\mathbb{R}^{3})^4, H^{1/2}(\Gamma)^4 )$ for $\lambda\in \mathbb{R} \setminus \{-1, 1 \}$ by 
Proposition~\ref{Proposition_G_Z_map}~$(vi)$ and hence also 
$(G^{\mp}_{\lambda})^*\in\mathsf{B}(L_{w}^{2}(\mathbb{R}^{3})^4, H^{-1/2}(\Gamma)^4 )$. 
Since $\Lambda_\lambda^{\eta, \tau, \pm}\in\mathsf{B}(H^{-1/2}(\Gamma)^4 )$ 
for $\lambda\in\mathbb{R} \setminus (\sigma_\textup{disc}(A_{\eta, \tau})\cup \mathcal{N}_\infty \cup \{-1, 1 \})$
by Proposition~\ref{proposition_LAP_Birman_Schwinger}~$(iv)$ 
and $G^{\pm}_{\lambda}\in\mathsf{B}(H^{-1/2}(\Gamma)^4,L_{-w}^{2}(\mathbb{R}^{3})^4)$ for $\lambda\in \mathbb{R} \setminus \{-1, 1 \}$ 
by Proposition~\ref{Proposition_G_Z_map}~$(iii)$
the assertion follows with the closed set $\mathcal{N}=\sigma_\textup{disc}(A_{\eta, \tau})\cup \mathcal{N}_\infty$; note that 
$\sigma_\textup{disc}(A_{\eta, \tau})$ is finite by Proposition~\ref{proposition_basic_delta_op}~$(iv)$.
\end{proof}

\begin{section}{The Scattering Matrix} \label{section_scattering}

In this section we calculate the scattering matrix for the couple $( A_{\eta, \tau}, A_0 )$, where
 $\eta, \tau \in \mathbb{R}$ are fixed such that $\eta^2  - \tau^2 \neq 4$ 
and $A_{\eta, \tau}$ is defined by~\eqref{def_delta_op}. First, we show the existence and 
completeness of the wave operators. We remark that their existence and completeness for smooth surfaces 
$\Gamma$ is shown in~\cite[Proposition~4.7]{BEHL19}, but we give a proof which also holds for $C^2$-surfaces $\Gamma$.

\begin{theorem}\label{W1} 
The scattering couple $( A_{\ee ,\es  },A_{0})$ is complete, that is, the strong limits 
\begin{equation*}
\begin{split}
W_{\pm}(A_{\ee ,\es  },A_{0})
:=&\,\text{s-}\lim_{t\to\pm\infty}e^{itA_{\ee ,\es  }}e^{-itA_{0}},\\
W_{\pm}(A_{0},A_{\ee ,\es  })
:=&\,\text{s-}\lim_{t\to\pm\infty}e^{itA_{0}}e^{-itA_{\ee ,\es  }}P^{\ee ,\es  }_\textup{ac}\,,
\end{split}
\end{equation*}
exist everywhere in $L^{2}(\RE^{3})^4$, and
$$
\ran(W_{\pm}(A_ {\ee ,\es  },A_{0}))
=(L^{2}(\RE^{3})^4)^{\ee ,\es  }_\textup{ac}\,,\qquad\ran(W_{\pm}(A_{0},A_{\ee ,\es  }))=L^{2}(\RE^{3})^4\,,
$$
and $W_{\pm}(A_ {\ee ,\es  },A_{0})^{*}=W_{\pm}(A_{0},A_{\ee ,\es  })$ hold; 
here $P^{\ee ,\es  }_\textup{ac}$ denotes the orthogonal projector onto the absolutely continuous subspace 
$(L^{2}(\RE^{3})^4)^{\ee ,\es  }_\textup{ac}$ relative to $A_{\ee ,\es  }$. 
\end{theorem}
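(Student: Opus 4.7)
The strategy is to apply the abstract completeness result \cite[Theorem~2.8]{MaPo} to the pair $(A_{\eta,\tau},A_0)$. As noted in Remark~\ref{remark_singular_perturbation}, the operator $A_{\eta,\tau}$ fits into the framework of \cite[Section~2]{MaPo} via the Kre\u{\i}n-type resolvent formula~\eqref{krein_resolvent_formula}, so it suffices to verify the analytic hypotheses needed in that theorem. These are essentially: (a) the limiting absorption principle for both $A_0$ and $A_{\eta,\tau}$ in the same weighted topology, established in Proposition~\ref{Proposition_Res_D}~$(ii)$ and Theorem~\ref{theorem_LAP}; (b) continuity of the boundary limits $\lambda\mapsto G_\lambda^{\pm}$, $\lambda\mapsto(G_\lambda^{\pm})^*$ and $\lambda\mapsto\Lambda_\lambda^{\eta,\tau,\pm}$ off a closed Lebesgue-null set in $\RE\setminus\{-1,1\}$, provided by Propositions~\ref{Proposition_G_Z_map}~$(iii)$,~$(vi)$ and~\ref{proposition_LAP_Birman_Schwinger}~$(iv)$; and (c) the local smoothness-type bounds encoded in~\eqref{sup_G_z}, which furnish the weak-type estimates required in the stationary representation of the wave operators.

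The principal obstacle is that \cite[Theorem~2.8]{MaPo} is formulated for semibounded operators, whereas $\sigma(A_{\eta,\tau})\supset(-\infty,-1]\cup[1,\infty)$ is unbounded from below. This is the reason we follow the route sketched in the introduction: fix any $\mu\in\mathsf{res}(A_0)\cap\mathsf{res}(A_{\eta,\tau})$ and apply the MaPo completeness theorem to the pair of \emph{bounded} self-adjoint operators $(-R_\mu^{\eta,\tau},-R_\mu^0)$. The required resolvent factorization is exactly~\eqref{krein_resolvent_formula} (with the appropriate sign convention of \cite{MaPo}), and all the continuity/estimates listed above carry over to this pair because the map $\lambda\mapsto -1/(\lambda-\mu)$ is a real-analytic diffeomorphism of $\sigma_\textup{ac}(A_0)$ and $\sigma_\textup{ac}(A_{\eta,\tau})$ onto the absolutely continuous spectrum of the resolvents. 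Completeness of $(-R_\mu^{\eta,\tau},-R_\mu^0)$ is then transferred to $(A_{\eta,\tau},A_0)$ via the Kato--Birman invariance principle applied to this diffeomorphism, which yields the existence of $W_{\pm}(A_{\eta,\tau},A_0)$ and the identification of their ranges with $(L^2(\RE^3)^4)^{\eta,\tau}_\textup{ac}$.

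Once existence and range of $W_{\pm}(A_{\eta,\tau},A_0)$ are established, the assertions about $W_{\pm}(A_0,A_{\eta,\tau})$ follow by general principles: the adjoint relation $W_{\pm}(A_{\eta,\tau},A_0)^*=W_{\pm}(A_0,A_{\eta,\tau})$ is a consequence of the intertwining property of wave operators and the insertion of $P^{\eta,\tau}_\textup{ac}$ in the second definition, while $\ran W_{\pm}(A_0,A_{\eta,\tau})=L^2(\RE^3)^4$ follows because $A_0$ itself is purely absolutely continuous by~\eqref{D_op} and the line just below it, so no projection is needed on the target side. Thus the only genuinely new work is the verification of the MaPo hypotheses for the resolvent pair, which amounts to collecting the continuity and boundedness statements already proved in Sections~\ref{section_preliminaries} and~\ref{section_delta_op}; the remainder is an invocation of \cite[Theorem~2.8]{MaPo} together with the invariance principle.
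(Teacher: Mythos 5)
Your list of ingredients is exactly the right one -- the uniform bound \eqref{sup_G_z} on $\sqrt{\varepsilon}\,\|G_{\lambda\pm i\varepsilon}\|$, the uniform bound on $\Lambda^{\eta,\tau}_{\lambda\pm i\varepsilon}$ coming from the continuity in Proposition~\ref{proposition_LAP_Birman_Schwinger}~$(iv)$, the limiting absorption principles of Proposition~\ref{Proposition_Res_D} and Theorem~\ref{theorem_LAP}, and Remark~\ref{remark_singular_perturbation} -- and with these the statement does follow from \cite[Theorem~2.8]{MaPo}. Where you diverge from the paper is in the claim that Theorem~2.8 of \cite{MaPo} cannot be applied to $(A_{\eta,\tau},A_0)$ directly because of the lack of semiboundedness. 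The paper's proof applies it directly: semiboundedness enters in \cite{MaPo} only in \emph{deriving} the limiting absorption principle, whereas the abstract completeness theorem takes the LAP-type estimates as hypotheses, and these are supplied here independently by Theorem~\ref{theorem_LAP} and the two uniform bounds above (this is precisely why the paper proves the LAP for $A_{\eta,\tau}$ by hand instead of quoting \cite{MaPo}). Your detour through the bounded pair $(-R_\mu^{\eta,\tau},-R_\mu^0)$ followed by the invariance principle is therefore unnecessary for this theorem, and it is also the more expensive route: one must re-verify the Kre\u{\i}n-formula and smoothness hypotheses for the resolvent pair (this is essentially the content of \eqref{RR} and Lemma~\ref{W2}, which the paper only needs later for the scattering matrix), and one must justify the invariance principle \eqref{invariance_principle} in a form that does not rest on trace-class assumptions -- note that $R_\mu^{\eta,\tau}-R_\mu^0=-G_\mu\Lambda_\mu^{\eta,\tau}G_\mu^*$ factors through the embedding $H^{1/2}(\Gamma)^4\hookrightarrow H^{-1/2}(\Gamma)^4$, which is compact but not trace class, so the classical Kato--Birman version is not available and one needs the stationary (Birman--Yafaev) form as in \cite{Y}. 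Your closing remarks on the adjoint relation and on $\ran W_\pm(A_0,A_{\eta,\tau})=L^2(\mathbb{R}^3)^4$ (using that $A_0$ is purely absolutely continuous) are fine; in the paper these come packaged with the conclusion of \cite[Theorem~2.8]{MaPo}.
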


\begin{proof} 
Let $\mathcal{N} \subset \mathbb{R} \setminus \{ -1, 1 \}$ be as in Theorem~\ref{theorem_LAP} and let ${I} \subset \mathbb{R} \setminus (\mathcal{N} \cup \{-1, 1 \})$ be compact. Then, by Proposition~\ref{Proposition_G_Z_map}~$(iv)$ 
\begin{equation*}
\sup_{(\lambda,\varepsilon)\in I\times (0,1)}\sqrt\varepsilon\,\|G_{\lambda\pm i\varepsilon}\|_{H^{-1/2}(\Gamma)^4, L^{2}(\mathbb{R}^3)^4}<\infty
\end{equation*}
holds and the continuity of $\Lambda_z^{\eta, \tau, \pm}$ from Proposition~\ref{proposition_LAP_Birman_Schwinger}~$(iv)$ and the fact 
that $H^{1/2}(\Gamma)^4$ is continuously embedded in $H^{-1/2}(\Gamma)^4$ imply 
\begin{equation*}
\sup_{(\lambda,\varepsilon)\in I\times (0,1)}\|\Lambda^{\ee ,\es  }_{\lambda\pm i\varepsilon}\|_{H^{1/2}(\Gamma)^4, H^{-1/2}(\Gamma)^4}<\infty\,.
\end{equation*}
Hence, the existence and completeness of the wave operators follows from \cite[Theorem 2.8]{MaPo} and Remark~\ref{remark_singular_perturbation}.
\end{proof}

\begin{remark} 
  {\rm (i)} Whenever the set $\mathcal{N}_{\infty}$ in Proposition \ref{proposition_LAP_Birman_Schwinger} (iv) is discrete, 
  then, proceeding as in \cite[Theorem 6.1]{Ag}, the limiting absorption principle provided in Theorem \ref{theorem_LAP} implies 
  absence of singular continuous spectrum and hence asymptotic completeness for the scattering couple $( A_{\ee ,\es  },A_{0})$.\\
  \noindent
  {\rm (ii)} In the so-called confinement case $\eta^2 - \tau^2 = -4$ the $\delta$-potential is impenetrable, i.e. the operator $A_{\eta, \tau}$ 
  decouples in the form
  $A_{\eta, \tau} = B_{\eta, \tau}(\Omega_+) \oplus B_{\eta, \tau}(\Omega_-)$, where $B_{\eta, \tau}(\Omega_\pm)$ 
  are self-adjoint operators in $L^2(\Omega_\pm)^4$; cf. \cite[Section~5]{AMV15}, \cite[Lemma~3.1]{BEHL19}, or the proof of 
  Proposition~\ref{proposition_embedded_eigenvalues}. This orthogonal decoupling extends to the 
  corresponding semigroups $e^{\pm itA_{\ee ,\es  }}$ in the definition of the wave operators and the scattering operator $S_{\ee ,\es  }$ below;
  for related considerations on the semigroup associated to $A_{\eta, \tau}$ in the confinement case we also refer to \cite[Section~5]{AMV15}.
  Since $\Omega_-$ is a bounded $C^2$-domain the spectrum of $B_{\eta, \tau}(\Omega_-)$ is discrete and hence the absolutely continuous spectra
  of $A_{\eta, \tau}$ and $B_{\eta, \tau}(\Omega_+)$ coincide. Therefore, for the scattering process only the operator $B_{\eta, \tau}(\Omega_+)$ 
  in the exterior domain $\Omega_+$ is relevant.
\end{remark}

The above theorem allows to define the unitary scattering operator in $L^{2}(\RE^{3})^4$ by  
$$
S_{\ee ,\es  }:=W_{+}(A_ {\ee ,\es  },A_{0})^{*}W_{-}(A_ {\ee ,\es  },A_{0})\,.
$$
To construct the associated scattering matrix, we introduce for any $\lambda\in\RE$ with $|\lambda|>1$   
\begin{equation*}
\begin{split}
L_{(\lambda)}^{2}({\mathbb S}^{2})^4:=\Bigg\{&
\psi_{\lambda}\in L^{2}({\mathbb S}^{2})^4:\\
&\frac12\left(I_{4}+\frac{\sqrt{\lambda^{2}-1}\,\alpha\cdot\xi+\beta}{\lambda}\right)\psi_{\lambda}(\xi)=\psi_{\lambda}(\xi)\ \text{for a.e. $\xi\in{\mathbb S}^{2}$}
\Bigg\}
\end{split}
\end{equation*}
and for any $w>1/2$, 
\begin{equation} \label{def_F_0}
F_{0}:L_{w}^{2}(\RE^{3})^4\to \int_{|\lambda|>1}^{\oplus} L_{(\lambda)}^{2}({\mathbb S}^{2})^4\,\sqrt{\lambda^{2}-1}\,|\lambda|\,d\lambda\,,\qquad 
F_{0}f(\lambda)=\widetilde f_{\lambda}\,,
\end{equation}
where the function $\widetilde f_{\lambda}\in L_{(\lambda)}^{2}({\mathbb S}^{2})^4$ is defined by 
$$
\widetilde        f_{\lambda}(\xi):=\frac12\left(I_{4}+\frac{\sqrt{\lambda^{2}-1}\,\alpha\cdot\xi+\beta}{\lambda}\right)
\widehat f(\sqrt{\lambda^{2}-1}\,\xi),
$$
and $\widehat f$ denotes the Fourier transform of $f$. 
Note that $\widetilde{f}_\lambda$ is well-defined, as $\widehat{f} \in H^w(\mathbb{R}^3)^4$  for $f \in L^2_w(\mathbb{R}^3)^4$ and hence, since $w>\frac{1}{2}$, $\widehat{f}$ has a trace on $\sqrt{\lambda^{2}-1} \mathbb{S}^2$.
The map $F_{0}$ extends to a unitary map on $L^{2}(\RE^{3})^4$, denoted by the same symbol,  
which diagonalizes $A_{0}$, i.e., $(F_{0}A_{0}f)(\lambda)=\lambda\, \widetilde f_{\lambda}$; see, e.g., \cite[Section 3.2]{Isozaki}. 
Then the scattering matrix is defined by
$$
S_{\ee ,\es  }(\lambda):
L_{(\lambda)}^{2}({\mathbb S}^{2})^4\to L_{(\lambda)}^{2}({\mathbb S}^{2})^4,\quad S_{\ee ,\es  }(\lambda)\widetilde f_{\lambda}
=(F_{0}S_{\ee ,\es  }f)(\lambda).
$$
In order to compute the scattering operator $S_{\eta, \tau}$ and the associated scattering matrix $S_{\ee ,\es  }(\lambda)$ we use 
the Birman-Kato invariance principle 
$$
W_{\pm}(A_{\ee,\es  },A_{0})=W_{\pm}(-R^{\ee,\es  }_{\mu},-R^{0}_{\mu})
$$
for some fixed
$\mu\in (-1,1)\cap\rho(A_{\ee ,\es  })$,
and so, by defining 
$$S_{\ee,\es  }^{\mu} := W_{+}(-R^{\ee,\es  }_{\mu}, -R^{0}_{\mu})^* W_{-}(-R^{\ee,\es  }_{\mu}, -R^{0}_{\mu})$$ 
we have
\begin{equation} \label{invariance_principle}
S_{\ee,\es  }=S_{\ee,\es  }^{\mu}\,.
\end{equation}
Below, we prove that all these objects associated to the scattering pair $( -R^{\ee,\es  }_{\mu}, -R^{0}_{\mu} )$ exist.
We note again, that the resolvents in \cite{MaPo} have a different sign as in this paper. Following the strategy developed 
\cite[Section 4]{MaPo}, we use the Birman-Yafaev  stationary scattering theory from \cite{Y} to provide the scattering matrix 
for the scattering couple $(-R^{\ee ,\es  }_{\mu}, -R^{0}_{\mu})$.

In the following let $\mu\in (-1,1)\cap\rho(A_{\ee ,\es  })$ be fixed. One verifies that
the unitary operator $F_{0}^{\mu}$ which diagonalizes $-R^{0}_{\mu}$ is 
\begin{equation*} 
  F^{\mu}_{0}f(\lambda)=\widetilde        f_{\lambda}^{\mu}:=\frac1\lambda\, \widetilde        f_{\mu-1/\lambda}\,,\quad\lambda\not=0\,,\quad\left|\mu-\frac1\lambda\right|>1.
\end{equation*} 
In the next preparatory lemma we compute the scattering matrix for the scattering couple $( -R^{\eta, \tau}_\mu, -R_\mu^0)$.

\begin{lemma}\label{W2}
The strong limits 
\begin{equation*}\label{WR}
W_{\pm}(-R^{\ee,\es}_{\mu},-R^{0}_{\mu})
:=\text{\rm s-}\lim_{t\to\pm\infty}e^{-itR^{\ee,\es}_{\mu}}e^{itR^{0}_{\mu}}
\end{equation*}
exist everywhere in $L^{2}(\RE^{3})^4$. 
Moreover, for any  $\lambda\not=0$ such that $\mu-\frac1\lambda\in \mathbb{R} \setminus (\mathcal{N} \cup [-1, 1 ])$, the scattering matrix $S^{\mu}_{\ee ,\es  }(\lambda)$ for the pair $( -R^{\eta, \tau}_\mu, -R_\mu^0)$ is given by
\begin{equation}\label{S1}
S^{\mu}_{\ee ,\es  }(\lambda)=I_4-2\pi i\,L^{\mu}_{\lambda}
\Lambda^{\ee,\es  } _{\mu}\big(I_4 - G^{*}_{\mu}\left(-R_{\mu}^{\ee,\es  }-(\lambda+ i0)\right)^{-1}G_{\mu}\Lambda^{\ee,\es  }  _{\mu}\big)
(L^{\mu}_{\lambda})^{*}\,,
\end{equation}
where
\begin{equation*}
L^{\mu}_\lambda: H^{-1/2}(\Gamma)^4\to L_{(\mu-1/\lambda)}^{2}({\mathbb S}^{2})^4\,,\quad 
L^{\mu}_{\lambda}\phi:=\frac1\lambda\,(F_{0}G_{\mu}\phi)(\mu-1/\lambda)\,.
\end{equation*}
\end{lemma}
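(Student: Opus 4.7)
My plan is to establish existence of the wave operators via the Birman-Kato invariance principle, and then to derive the stationary formula \eqref{S1} by applying the Birman-Yafaev framework of \cite{Y} to the Kre\u{\i}n perturbation, exactly along the lines of \cite[Section~4]{MaPo}.

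Since $\mu \in (-1,1) \subset \mathsf{res}(A_0) \cap \mathsf{res}(A_{\eta,\tau})$, both $\sigma(A_0)$ and $\sigma(A_{\eta,\tau})$ are contained in $(-\infty,\mu) \cup (\mu,\infty)$, and the function $\phi(x) := -(x-\mu)^{-1}$ is real-analytic and strictly increasing on each of these two components. Therefore $\phi$ is admissible for the Birman-Kato invariance principle, and combined with the existence and completeness of $W_\pm(A_{\eta,\tau},A_0)$ proved in Theorem~\ref{W1} this yields
\[
  W_\pm(-R^{\eta,\tau}_\mu,-R^0_\mu) = W_\pm(A_{\eta,\tau},A_0)
\]
on the whole of $L^{2}(\mathbb{R}^{3})^{4}$, so in particular the strong limits in the statement exist.

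For the representation \eqref{S1}, the Kre\u{\i}n resolvent formula in Proposition~\ref{proposition_basic_delta_op}~$(iii)$ gives
\[
  (-R^{\eta,\tau}_\mu) - (-R^0_\mu) = G_\mu\,\Lambda^{\eta,\tau}_\mu\,G^*_\mu,
\]
i.e.\ the perturbation is already in the factored form required by Yafaev's stationary scheme, with auxiliary space $H^{-1/2}(\Gamma)^4$ and a fixed bounded middle factor $\Lambda^{\eta,\tau}_\mu$. The spectral representation $F^\mu_0$ of $-R^0_\mu$ is inherited from that of $A_0$ through the change of variables $z_\lambda := \mu - 1/\lambda$, and composing $F^\mu_0$ with $G_\mu$ produces precisely the trace map $L^\mu_\lambda$ in the statement. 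Granted these identifications, the general Birman-Yafaev stationary representation immediately yields \eqref{S1}, provided one knows that the sandwiched resolvent
\[
G^*_\mu\,\bigl(-R^{\eta,\tau}_\mu - (\lambda + i\varepsilon)\bigr)^{-1} G_\mu
\]
admits a boundary value in $\mathsf{B}(H^{-1/2}(\Gamma)^4, H^{1/2}(\Gamma)^4)$ as $\varepsilon \searrow 0$, for a.e.\ admissible $\lambda$.

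The main obstacle is this limiting absorption for the sandwiched resolvent, which I would reduce to the already established limits of $M_z$ and $\Lambda^{\eta,\tau}_z$. Using the algebraic identity
\[
  (-R^{\eta,\tau}_\mu - \lambda)^{-1} = -\lambda^{-1} I_4 + \lambda^{-2} R^{\eta,\tau}_{z_\lambda},
\]
which follows from $R^{\eta,\tau}_\mu + \lambda = \lambda (A_{\eta,\tau}-\mu)^{-1}(A_{\eta,\tau}-z_\lambda)$, and inserting the Kre\u{\i}n formula from Proposition~\ref{proposition_basic_delta_op}~$(iii)$ for $R^{\eta,\tau}_{z_\lambda}$, the sandwich breaks into a polynomial expression in $G^*_\mu G_\mu$, $G^*_\mu R^0_{z_\lambda} G_\mu$, $G^*_\mu G_{z_\lambda}$, $G^*_{\bar z_\lambda} G_\mu$ and $\Lambda^{\eta,\tau}_{z_\lambda}$. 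Since $G^*_{\bar z} = \gamma_0 R^0_z$ and $G_z = R^0_z \gamma_0^*$, the resolvent identity for $R^0$ together with Proposition~\ref{Proposition_m_Z_map}~$(ii)$ expresses each of these ``mixed'' products as a fixed $\mu$-dependent bounded operator plus an affine combination of $M_\mu$ and $M_{z_\lambda}$, whose boundary values exist by Proposition~\ref{Proposition_m_Z_map}~$(iv)$; the limit of $\Lambda^{\eta,\tau}_{z_\lambda}$ exists by Proposition~\ref{proposition_LAP_Birman_Schwinger}~$(iv)$ because the hypothesis $\mu - 1/\lambda \in \mathbb{R} \setminus (\mathcal{N}\cup[-1,1])$ ensures $z_\lambda \notin \mathcal{N}_\infty \cup \sigma_{\textup{disc}}(A_{\eta,\tau}) \cup [-1,1]$. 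Plugging the resulting boundary values into the Birman-Yafaev stationary formula produces the explicit representation~\eqref{S1}.
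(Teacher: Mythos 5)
Your overall strategy for the representation formula is the one the paper uses: write $-R^{\eta,\tau}_{\mu}+R^{0}_{\mu}=G_{\mu}\Lambda^{\eta,\tau}_{\mu}G^{*}_{\mu}$ and apply Yafaev's stationary Theorem~4$'$ with the identifications $G\leftrightarrow G^{*}_{\mu}$, $\mathcal V\leftrightarrow\Lambda^{\eta,\tau}_{\mu}$, $Z_{0}(\lambda;G)\leftrightarrow L^{\mu}_{\lambda}$. Your reduction of the boundary values of the sandwiched resolvent to those of $M_{z}$ and $\Lambda^{\eta,\tau}_{z}$ via $(-R^{\eta,\tau}_{\mu}-z)^{-1}=-\tfrac1z(I_4-\tfrac1z R^{\eta,\tau}_{\mu-1/z})$ is sound and essentially reproduces the identity the paper derives later in the proof of Theorem~\ref{theorem_scattering_matrix}; the paper instead gets these limits directly from Theorem~\ref{theorem_LAP} in weighted $L^2$-spaces. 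However, there are two genuine gaps. First, the existence and equality of boundary values of $G^{*}_{\mu}(-R^{\eta,\tau}_{\mu}-(\lambda\pm i0))^{-1}G_{\mu}$ is \emph{not} the only hypothesis of Yafaev's Theorem~4$'$: one must also verify that $G^{*}_{\mu}$ is $|R^{0}_{\mu}|^{1/2}$-bounded and, crucially, weakly $R^{0}_{\mu}$-smooth, i.e.
\begin{equation*}
\sup_{0<\varepsilon<1}\varepsilon\,\|G^{*}_{\mu}(-R^{0}_{\mu}-(\lambda\pm i\varepsilon))^{-1}\|^{2}_{L^{2}(\mathbb{R}^3)^4,H^{1/2}(\Gamma)^4}\le c_{\lambda}<\infty \quad\text{for a.e. } \lambda.
\end{equation*}
This condition is what makes $L^{\mu}_{\lambda}$ well defined for a.e.~$\lambda$ and it does not follow from the existence of the sandwiched boundary values; the paper proves it by the estimate $\|G^{*}_{\mu}R^{0}_{z}\|\le\|R^{0}_{\mu}\|\,\|G_{\bar z}\|$ combined with the a priori bound \eqref{sup_G_z} from Proposition~\ref{Proposition_G_Z_map}~$(iv)$. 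Your proposal never invokes \eqref{sup_G_z}, and without it this hypothesis is unverified.

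Second, your derivation of the \emph{existence} of $W_{\pm}(-R^{\eta,\tau}_{\mu},-R^{0}_{\mu})$ from Theorem~\ref{W1} via the invariance principle is not justified as stated: admissibility of $\phi(x)=-(x-\mu)^{-1}$ together with existence and completeness of $W_{\pm}(A_{\eta,\tau},A_{0})$ does not by itself yield the invariance principle, which requires its own structural hypotheses on the perturbation (trace-class conditions or the smooth-perturbation framework). The paper avoids this circularity by obtaining the existence of the resolvent wave operators as an \emph{output} of Yafaev's Theorem~4$'$ (once its hypotheses are checked as above), and only afterwards uses the invariance principle to identify $S_{\eta,\tau}=S^{\mu}_{\eta,\tau}$ in Theorem~\ref{theorem_scattering_matrix}. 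You should restructure the existence argument accordingly, and also record that $(\Lambda^{\eta,\tau}_{\mu})^{*}=\Lambda^{\eta,\tau}_{\mu}$ for real $\mu$ (Proposition~\ref{proposition_LAP_Birman_Schwinger}~$(ii)$), which is needed to place the middle factor into Yafaev's framework (modulo the rigging $H^{1/2}(\Gamma)^4\subset L^2(\Gamma)^4\subset H^{-1/2}(\Gamma)^4$).
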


\begin{proof}
We follow the same arguments as in the proof of \cite[Theorem 4.1]{MaPo}.
By $-R_{\mu}^{\ee,\es  }=-R_{\mu}^{0}+G_{\mu}\Lambda^{\ee,\es } _{\mu}G^{*}_{\mu}$, we can use \cite[Theorem 4', page 178]{Y}; for that, 
we notice that the maps denoted there by $G$ and $\mathcal{V}$ correspond to our $G_{\mu}^{*}$ and $\Lambda^{\ee,\es } _{\mu}$, respectively
\footnote{In fact, in the assumptions in \cite[Theorem 4', page 178]{Y} one has $\mathcal V=\mathcal V^*$ in the same Hilbert space $\mathcal G$.
However, one verifies that also more general perturbations of the form $G^*\mathcal V G$ with $\mathcal V:\mathcal G_1\rightarrow \mathcal G_{-1}$
in a rigging $\mathcal G_1\subset\mathcal G\subset\mathcal G_{-1}$ can be treated.},
and that $(\Lambda_\mu^{\eta, \tau})^* = \Lambda_\mu^{\eta, \tau}$ for our choice of a real $\mu$, when $\Lambda_\mu^{\eta, \tau}$ is viewed 
as an operator in $\B(H^{1/2}(\Gamma)^4, H^{-1/2}(\Gamma)^4)$, see Proposition~\ref{proposition_LAP_Birman_Schwinger}~$(ii)$. Moreover, the maps $B(z)$ and $Z_0(\lambda; G)$ appearing in \cite[Theorem 4', page 178]{Y} are in our situation
$B(z) = G_\mu^* (-R^{\eta, \tau}_\mu - z)^{-1} G_\mu$ and $Z_0(\lambda; G) \phi = (F_0^\mu G_\mu \phi)(\lambda) = L_\lambda^\mu \phi$, $\phi \in H^{-1/2}(\Gamma)^4$.

Let us check that the assumptions required in \cite[Theorem~4', page~178]{Y} are satisfied. First, since $G^{*}_{\mu}\in \B(L^{2}(\RE^{3})^4,H^{1/2}(\Gamma)^4)$, the operator $G^{*}_{\mu}$ is $|R^{0}_{\mu}|^{1/2}$-bounded. To proceed, we note that the relations (which follow from the resolvent identity)
\begin{equation}\label{RR}
\left(-R_{\mu}^{0}-z\right)^{-1}=-\frac1z\left(I_4-\frac1{z}\,R^{0}_{\mu-\frac1z}\right),\quad   \left(-R_{\mu}^{\ee ,\es  }-z\right)^{-1}=-\frac1z\left(I_4-\frac1{z}\,R^{\ee ,\es  }_{\mu-\frac1z}\right),
\end{equation}
and the limiting absorption principles for $A_{0}$ and $A_{\ee ,\es  }$ (see Proposition~\ref{Proposition_Res_D} and Theorem~\ref{theorem_LAP}) imply that the limits 
\begin{equation*}
\left(-R_{\mu}^{0}-(\lambda\pm i0)\right)^{-1}:=\lim_{\varepsilon\searrow 0}\left(-R_{\mu}^{0}-(\lambda\pm i\varepsilon)\right)^{-1}
\end{equation*}
for $\lambda\not=0, \mu-\frac1\lambda\not=\pm 1$, and
\begin{equation*}
\left(-R_{\mu}^{\ee ,\es  }-(\lambda\pm i0)\right)^{-1}:=
\lim_{\varepsilon\searrow 0}\left(-R_{\mu}^{\ee ,\es  }-(\lambda\pm i\varepsilon)\right)^{-1}
\end{equation*}
for $\lambda\not=0, \mu-\frac1\lambda\in\mathbb{R} \setminus (\mathcal{N} \cup \{-1, 1 \})$,
exist in $\B(L^{2}_{w}(\RE^{3}),L^{2}_{-w}(\RE^{3}))$.
Therefore, the limits 
\begin{equation*}
\lim_{\varepsilon\searrow 0}\, G^{*}_{\mu}(-R^{0}_{\mu}-(\lambda\pm i\varepsilon))^{-1}\,,
\quad
\lim_{\varepsilon\searrow 0}\, G^{*}_{\mu}(-R^{\ee,\es  }_{\mu}-(\lambda\pm i\varepsilon))^{-1}\,,
\end{equation*}
and 
\begin{equation*}
  \lim_{\varepsilon\searrow 0}\, G^{*}_{\mu}(-R^{\ee,\es  }_{\mu}-(\lambda\pm i\varepsilon))^{-1}G_{\mu}
\end{equation*}
exist. Thus, to get the claimed result we need to check the validity of the remaining assumption in 
\cite[Theorem 4', page 178]{Y}, namely that $G^{*}_{\mu}$ is weakly-$R^{0}_{\mu}$ smooth, i.e., by \cite[Lemma 2, page 154]{Y}, 
\begin{equation*}
\sup_{0<\varepsilon<1}\varepsilon\,\|G^{*}_{\mu} (-R^{0}_{\mu}-(\lambda\pm i\varepsilon))^{-1}\|_{L^{2}(\mathbb{R}^3)^4,H^{1/2}(\Gamma)^4}^{2}\le c_{\lambda}<\infty\,,\quad\text{a.e. $\lambda$}\,.
\end{equation*}
By \eqref{RR}, this is a consequence of
\begin{equation}\label{in2.2}
\sup_{0<\delta<1}\delta\,\|G^{*}_{\mu}R^{0}_{\mu-\frac1\lambda\pm i\delta}\|_{L^{2}(\mathbb{R}^3)^4,H^{1/2}(\Gamma)^4}^{2}\le C_{\lambda}<\infty\,,\quad\text{a.e. $\lambda$}\,.
\end{equation}
To show~\eqref{in2.2}, we compute for $z \in \mathbb{C} \setminus \mathbb{R}$
\begin{equation*}
\begin{split}
\|G^{*}_{\mu} R^{0}_{z}\|_{L^{2}(\mathbb{R}^3)^4,H^{1/2}(\Gamma)^4} 
 &=\|\gamma_{0} R^{0}_{\mu} R^{0}_{z}\|_{L^{2}(\mathbb{R}^3)^4,H^{1/2}(\Gamma)^4}\\
 &=\|\gamma_{0} R^{0}_{z} R^{0}_{\mu}\|_{L^{2}(\mathbb{R}^3)^4,H^{1/2}(\Gamma)^4}\\
&=
\|R^{0}_{\mu}(\gamma_{0} R^{0}_{z})^{*}\|_{H^{-1/2}(\Gamma)^4,L^{2}(\mathbb{R}^3)^4}\\
&\le
\|R^{0}_{\mu}\|_{L^{2}(\mathbb{R}^3)^4,L^{2}(\mathbb{R}^3)^4}\|G_{\bar z}\|_{H^{-1/2}(\Gamma)^4,L^{2}(\mathbb{R}^3)^4}\,.
\end{split}
\end{equation*}
With the help of~\eqref{sup_G_z} the last calculation shows that \eqref{in2.2} is indeed true. Thus, by \cite[Theorem 4', page 178]{Y}, 
the limits \eqref{WR} exist everywhere in $L^{2}(\RE^{3})^4$ and the corresponding scattering matrix is given by \eqref{S1}.
\end{proof}

With the invariance principle and Lemma~\ref{W2} it is now possible to compute the scattering matrix for the pair $( A_{\eta, \tau}, A_0 )$.

\begin{theorem}  \label{theorem_scattering_matrix}
The scattering matrix 
$$
S_{\ee ,\es  }(\lambda):
L_{(\lambda)}^{2}({\mathbb S}^{2})^4\to L_{(\lambda)}^{2}({\mathbb S}^{2})^4\,,\quad \lambda\in\mathbb{R}\setminus(\mathcal{N} \cup [-1,1])\,,
$$
for the scattering couple $(A_{\ee,\es}, A_{0})$ has the representation
\begin{equation}\label{S-matrix}
S_{\ee ,\es  }(\lambda)=I_4-2\pi iL_{\lambda} \Lambda^{\eta, \tau, +}_\lambda L_{\lambda}^{*}\,,\end{equation}
where  
$L_\lambda: H^{-1/2}(\Gamma)^4\to L_{(\lambda)}^{2}({\mathbb S}^{2})^4$ 
acts on any $\phi\in L^{2}(\Gamma)^4$ as  
$$
L_{\lambda}\phi(\xi):=
\frac12\left(I_{4}+\frac{\sqrt{\lambda^{2}-1}\,\alpha\cdot\xi+\beta}{\lambda}\right)\frac1{(2\pi)^{3/2}}\int_{\Gamma}e^{-i\sqrt{\lambda^{2}-1}\,\xi\cdot x}\phi(x)\,d\sigma(x)\,.
$$ 
\end{theorem}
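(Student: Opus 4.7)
The plan is to combine the Birman--Kato invariance principle~\eqref{invariance_principle} with the scattering matrix formula~\eqref{S1} from Lemma~\ref{W2}, and then identify the result with the right-hand side of~\eqref{S-matrix}. Setting $\tilde{\lambda}:=\mu-1/\lambda$, the explicit form of $F^{\mu}_{0}$ shows that the fibers of $F^{\mu}_{0}$ at $\lambda$ and of $F_{0}$ at $\tilde\lambda$ differ only by the nonzero scalar $1/\lambda$, so that $S^{\mu}_{\eta,\tau}(\lambda)$ and $S_{\eta,\tau}(\tilde\lambda)$ coincide as operators on $L^{2}_{(\tilde\lambda)}(\mathbb{S}^{2})^{4}$. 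Hence it suffices to rewrite~\eqref{S1} in the form $I_{4}-2\pi i L_{\tilde\lambda}\Lambda^{\eta,\tau,+}_{\tilde\lambda}L^{*}_{\tilde\lambda}$.

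For the outer factors, the identity $R^{0}_{\mu}=(A_{0}+\mu)(-\Delta+1-\mu^{2})^{-1}$ together with $\widehat{\gamma_0^*\phi}(k)=(2\pi)^{-3/2}\int_{\Gamma}e^{-ik\cdot x}\phi(x)\,d\sigma(x)$ yields the Fourier representation of $G_{\mu}\phi$. Evaluating at $k=\sqrt{\tilde\lambda^{2}-1}\,\xi$ and using $A_{\tilde\lambda}^{2}=\tilde\lambda^{2}I_{4}$ for $A_{\tilde\lambda}:=\sqrt{\tilde\lambda^{2}-1}\,\alpha\cdot\xi+\beta$, the spectral projector $P_{\tilde\lambda}=\tfrac{1}{2}(I_{4}+A_{\tilde\lambda}/\tilde\lambda)$ satisfies $P_{\tilde\lambda}(A_{\tilde\lambda}+\mu)=(\tilde\lambda+\mu)P_{\tilde\lambda}$, which collapses $\widetilde{(G_{\mu}\phi)}_{\tilde\lambda}$ to $-\lambda\,L_{\tilde\lambda}\phi$. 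Consequently $L^{\mu}_{\lambda}=-L_{\tilde\lambda}$, and the sign squares out of the sandwich $L^{\mu}_{\lambda}(\,\cdot\,)(L^{\mu}_{\lambda})^{*}$.

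The main step is to establish the algebraic identity
\begin{equation*}
\Lambda^{\eta,\tau}_{\mu}\bigl(I_{4}-G^{*}_{\mu}(-R^{\eta,\tau}_{\mu}-(\lambda+i0))^{-1}G_{\mu}\Lambda^{\eta,\tau}_{\mu}\bigr)=\Lambda^{\eta,\tau,+}_{\tilde\lambda}.
\end{equation*}
I would first use the analog of~\eqref{RR} for $A_{\eta,\tau}$ to rewrite the inner resolvent as $-\tfrac{1}{\lambda}(I_{4}-\tfrac{1}{\lambda}R^{\eta,\tau,+}_{\tilde\lambda})$, substitute Kre\u{\i}n's formula from Theorem~\ref{theorem_LAP} for $R^{\eta,\tau,+}_{\tilde\lambda}$, and invoke Proposition~\ref{Proposition_G_Z_map}~$(i)$ and Proposition~\ref{Proposition_m_Z_map}~$(ii)$ in the form $R^{0,+}_{\tilde\lambda}G_{\mu}=\lambda(G_{\mu}-G^{+}_{\tilde\lambda})$ and $G^{*}_{\mu}G^{+}_{\tilde\lambda}=(G^{-}_{\tilde\lambda})^{*}G_{\mu}=\lambda(M_{\mu}-M^{+}_{\tilde\lambda})$. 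After these substitutions the $G^{*}_{\mu}G_{\mu}$ contributions cancel and the bracket reduces to $-(M_{\mu}-M^{+}_{\tilde\lambda})\bigl[I_{4}+\Lambda^{\eta,\tau,+}_{\tilde\lambda}(M_{\mu}-M^{+}_{\tilde\lambda})\bigr]$.

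The proof is finished by two applications of Proposition~\ref{proposition_LAP_Birman_Schwinger}~$(i)$ with $(z_{1},z_{2})=(\mu,\tilde\lambda+i0)$ and $(\tilde\lambda+i0,\mu)$, which translate (using $\tilde\lambda-\mu=-1/\lambda$) into $[I_{4}+\Lambda^{\eta,\tau,+}_{\tilde\lambda}(M_{\mu}-M^{+}_{\tilde\lambda})]\Lambda^{\eta,\tau}_{\mu}=\Lambda^{\eta,\tau,+}_{\tilde\lambda}$ and $\Lambda^{\eta,\tau}_{\mu}(M_{\mu}-M^{+}_{\tilde\lambda})\Lambda^{\eta,\tau,+}_{\tilde\lambda}=\Lambda^{\eta,\tau,+}_{\tilde\lambda}-\Lambda^{\eta,\tau}_{\mu}$. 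Chaining these through the preceding expression telescopes it down to $\Lambda^{\eta,\tau,+}_{\tilde\lambda}$, and inserting this together with $L^{\mu}_{\lambda}=-L_{\tilde\lambda}$ into~\eqref{S1} gives~\eqref{S-matrix}. The only real obstacle is the bookkeeping: keeping track of $\pm i0$ boundary values, the adjoint swap $(G^{\pm}_{z})^{*}=\gamma_{0}R^{0,\mp}_{z}$, and the sign conversions $\lambda$ vs.\ $\tilde\lambda-\mu=-1/\lambda$ throughout; no new analytic input is needed beyond what Theorem~\ref{theorem_LAP} and Propositions~\ref{Proposition_G_Z_map}--\ref{proposition_LAP_Birman_Schwinger} already provide.
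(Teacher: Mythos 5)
Your proposal is correct and follows essentially the same route as the paper's proof: Birman--Kato invariance, the formula of Lemma~\ref{W2}, the identification $L^{\mu}_{\lambda}=-L_{\mu-1/\lambda}$ via $(F_{0}R^{0}_{\mu}f)(\lambda)=(\lambda-\mu)^{-1}\widetilde f_{\lambda}$, and the reduction of $\Lambda^{\eta,\tau}_{\mu}\bigl(I_{4}-G^{*}_{\mu}(-R^{\eta,\tau}_{\mu}-(\lambda+i0))^{-1}G_{\mu}\Lambda^{\eta,\tau}_{\mu}\bigr)$ to $\Lambda^{\eta,\tau,+}_{\mu-1/\lambda}$ by Kre\u{\i}n's formula together with the difference identities of Propositions~\ref{Proposition_G_Z_map}, \ref{Proposition_m_Z_map}, and \ref{proposition_LAP_Birman_Schwinger} (you phrase the telescoping through $M_{\mu}-M^{+}_{\tilde\lambda}$ and work at the boundary value directly, whereas the paper computes for non-real $z$ and then passes to the limit, but this is only a cosmetic difference). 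Only check the sign in your intermediate display: the correction term should come out as $+(M_{\mu}-M^{+}_{\tilde\lambda})\bigl[I_{4}+\Lambda^{\eta,\tau,+}_{\tilde\lambda}(M_{\mu}-M^{+}_{\tilde\lambda})\bigr]\Lambda^{\eta,\tau}_{\mu}$, which is what makes your two final identities telescope to $\Lambda^{\eta,\tau,+}_{\tilde\lambda}$.
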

\begin{proof}
Recall that by Theorem \ref {W1}, Lemma \ref{W2}, and by Birman-Kato invariance principle~\eqref{invariance_principle}, one has
$$
S_{\ee,\es  }=S_{\ee,\es  }^{\mu}\,.
$$
To get the representation in~\eqref{S-matrix}, we note first that $(F_0^\mu g)(\lambda)=\frac1\lambda (F_0 g)(\mu-\tfrac1\lambda)$ implies 
$(F_0 g)(\lambda) = (\mu- \lambda)^{-1} (F_0^\mu g)((\mu-\lambda)^{-1})$. Hence, we conclude with the invariance principle for any $f \in L^2(\mathbb{R}^3)^4$
\begin{equation*}
  \begin{split}
    S_{\ee ,\es  }(\lambda) \widetilde{f}_\lambda &= (F_0 S_{\ee ,\es  } f)(\lambda) = (F_0 S_{\ee ,\es  }^\mu f)(\lambda)\\
    &= (\mu - \lambda)^{-1} (F_0^\mu S_{\ee ,\es  }^\mu f)((\mu - \lambda)^{-1})
    = S_{\ee ,\es  }^\mu((\mu - \lambda)^{-1}) \widetilde{f}_\lambda,
  \end{split}
\end{equation*}
that means
(see also \cite[Equation (14), Section 6, Chapter 2]{Y})
\begin{equation}\label{SS}
S_{\ee ,\es  }(\lambda)=S^{\mu}_{\ee ,\es  }((-\lambda+\mu)^{-1})\,.
\end{equation}
Next, using~\eqref{RR}, Proposition~\ref{proposition_basic_delta_op}~$(iii)$, Proposition~\ref{Proposition_G_Z_map}~$(i)$, and Proposition~\ref{proposition_LAP_Birman_Schwinger}~$(i)$ we compute for $z \in \mathbb{C} \setminus \mathbb{R}$
\begin{equation*}
  \begin{split}
    \Lambda^{\ee,\es}_{\mu}\big(I_4 &- G^{*}_{\mu}\left(-R_{\mu}^{\ee,\es  }-z\right)^{-1}G_{\mu}\Lambda^{\ee,\es}_{\mu}\big)\\
    &= \Lambda^{\ee,\es}_{\mu}\left(I_4 + \frac{1}{z} G^{*}_{\mu}\left(I_4-\frac{1}{z} R_{\mu-\frac{1}{z}}^{\ee,\es  } \right) G_{\mu}\Lambda^{\ee,\es}_{\mu}\right) \\
    &= \Lambda^{\ee,\es}_{\mu}\left(I_4 + \frac{1}{z} G^{*}_{\mu} \left(G_{\mu} - \frac{1}{z} R_{\mu-\frac{1}{z}}^{0} G_{\mu} + \frac{1}{z} G_{\mu - \frac{1}{z}} \Lambda_{\mu - \frac{1}{z}}^{\eta, \tau} G_{\mu-\frac{1}{\bar z}}^* G_{\mu} \right) \Lambda^{\ee,\es}_{\mu}\right) \\
    &= \Lambda^{\ee,\es}_{\mu} + \frac{1}{z} \Lambda^{\ee,\es}_{\mu} G^{*}_{\mu} \left(G_{\mu - \frac{1}{z}} \Lambda^{\ee,\es}_{\mu}+ \frac{1}{z} G_{\mu - \frac{1}{z}} \Lambda_{\mu - \frac{1}{z}}^{\eta, \tau} G_{\mu-\frac{1}{\bar z}}^* G_\mu \Lambda^{\ee,\es}_{\mu} \right)  \\
    &= \Lambda^{\ee,\es}_{\mu} + \frac{1}{z} \Lambda^{\ee,\es}_{\mu} G^{*}_{\mu} \left(G_{\mu - \frac{1}{z}} \Lambda^{\ee,\es}_{\mu}+ G_{\mu - \frac{1}{z}} \bigl(\Lambda_{\mu - \frac{1}{z}}^{\eta, \tau} - \Lambda^{\ee,\es}_{\mu}\bigr) \right)  \\
    &= \Lambda^{\ee,\es}_{\mu} + \frac{1}{z} \Lambda^{\ee,\es}_{\mu} G^{*}_{\mu} G_{\mu - \frac{1}{z}} \Lambda_{\mu - \frac{1}{z}}^{\eta, \tau}  \\
    &=
    \Lambda^{\ee,\es } _{\mu-\frac1z}\,.
  \end{split}
\end{equation*}
Using this identity with $z=\lambda\pm i\varepsilon$ for $\mu-\frac1{\lambda}\in \mathbb{R} \setminus \mathcal{N}$ 
and considering the limit $\varepsilon\searrow 0$ we deduce with Proposition~\ref{proposition_LAP_Birman_Schwinger}, Theorem~\ref{theorem_LAP}, and \eqref{RR}
$$
\Lambda^{\ee,\es } _{\mu}\big(I_4-G^{*}_{\mu}\left(-R_{\mu}^{\ee,\es  }-(\lambda + i0)\right)^{-1}G_{\mu}\Lambda^{\ee,\es}  _{\mu}\big)
=\Lambda^{\ee,\es,+}_{\mu-\frac1{\lambda}}.
$$
Therefore, by Lemma \ref{W2} we have 
\begin{equation}\label{S111}
S^{\mu}_{\ee ,\es  }(\lambda)=I_4-2\pi i\,L^{\mu}_{\lambda}
\Lambda^{\ee,\es,+}_{\mu-\frac1{\lambda}}
(L^{\mu}_{\lambda})^{*}\,.
\end{equation}
Thus \eqref{S-matrix} follows from \eqref{S111}, equation~\eqref{SS}, and by setting  
$L_{\lambda}:= -L^{\mu}_{{(-\lambda+\mu)^{-1}}}$ (note that the minus sign does not change the final result, 
as $L_\lambda$ appears only in products with $L_\lambda^*$). Let us finally calculate the explicit action 
of $L_{\lambda}$ by using the definition of the map $F_{0}$ from~\eqref{def_F_0}. Since 
$(F_{0}R^{0}_{\mu} f)(\lambda)=(\lambda-\mu)^{-1}\widetilde f_\lambda$ and $G_\mu=R^{0}_{\mu} \gamma_0^{*}$ we have for $\phi \in L^2(\Gamma)^4$
\begin{align*}
L_{\lambda}\phi(\xi)&=(\lambda - \mu)((F_{0} R^{0}_{\mu} \gamma_0^{*}\phi)(\lambda))(\xi)\\
&=
\frac12\left(I_{4}+\frac{\sqrt{\lambda^{2}-1}\,\alpha\cdot\xi+\beta}{\lambda}\right)\widehat{\gamma_{0}^{*}\phi}(\sqrt{\lambda^{2}-1}\,\xi)\\
&=\frac12\left(I_{4}+\frac{\sqrt{\lambda^{2}-1}\,\alpha\cdot\xi+\beta}{\lambda}\right)\frac1{(2\pi)^{3/2}}\int_{\Gamma}e^{-i\sqrt{\lambda^{2}-1}\,\xi\cdot x}\phi(x)\,d\sigma(x)\,.
\end{align*}  
This completes the proof of Theorem~\ref{theorem_scattering_matrix}.
\end{proof}
\end{section}

\end{document}